\documentclass[lettersize,journal]{IEEEtran}

\usepackage{amsmath,amsfonts}
\usepackage{algorithm}
\usepackage{algpseudocode}
\usepackage{array}
\usepackage[caption=false,font=normalsize,labelfont=sf,textfont=sf]{subfig}
\usepackage{textcomp}
\usepackage{stfloats}
\usepackage{url}
\usepackage{verbatim}
\usepackage{graphicx}
\usepackage{balance}
\usepackage{mathtools}
\usepackage{bm}
\usepackage{amssymb}
\usepackage{optidef}
\usepackage{enumerate}
\usepackage{epstopdf}
\usepackage{fancyhdr} 
\usepackage{indentfirst}
\usepackage{multirow}
\usepackage{threeparttable}
\usepackage{CJK}
\usepackage{makecell}
\usepackage{cases}
\usepackage{cite}
\usepackage{booktabs} 
\usepackage{color}
\usepackage[justification=centering]{caption}
\def\BibTeX{{\rm B\kern-.05em{\sc i\kern-.025em b}\kern-.08em
		T\kern-.1667em\lower.7ex\hbox{E}\kern-.125emX}}

\newtheorem{Proposition}{\it Proposition}[section]
\newtheorem{proof}{\it Proof}[section]
\newtheorem{Lemma}{\it Lemma}[section]

\newtheorem{remark}{Remark}[section]

\makeatletter
\newcommand{\Rmnum}[1]{\expandafter\@slowromancap\romannumeral #1@}
\makeatother

\begin{document}
	
	\title{\Huge Adaptive Source-Channel Coding for Bi-static Integrated Sensing and Semantic Communications }
	\author{
		\IEEEauthorblockN{Haotian Wang,  Dan Wang,  Xiaodong Xu, \IEEEmembership{Senior Member, IEEE},  Chuan Huang, \IEEEmembership{Member, IEEE}, Hao Chen, \IEEEmembership{Member, IEEE}, Nan Ma, \IEEEmembership{Member, IEEE} and Ping Zhang, \IEEEmembership{Fellow, IEEE}}
		\thanks{
			A conference version of this work \cite{wang2026integrated}  was accepted by IEEE WCNC 2026. 
			
			H. Wang, X. Xu, and N. Ma are with the State Key Laboratory of Networking and Switching Technology, Beijing University of Posts and Telecommunications, Beijing, China, 100876, and also with the Department of Broadband Communication, Pengcheng Laboratory, Shenzhen, China, 518055. Emails: wanght@bupt.edu.cn, xuxiaodong@bupt.edu.cn, manan@bupt.edu.cn.

			D. Wang and H. Chen are with the Department of  Broadband Communications, Pengcheng Laboratory, Shenzhen, China, 518055, and D. Wang is also with the Guangdong Provincial Key Laboratory of Future Networks of Intelligence, The Chinese University of Hong Kong, Shenzhen, 518172. Emails: wangd01@pcl.ac.cn, chenh03@pcl.ac.cn.
			
			Chuan Huang is with the Shenzhen Institute for Advanced Study, the University of Electronic Science and Technology of China, Shenzhen, 518110, China, and the Shenzhen Future Network of Intelligence Institute, Shenzhen, 518172, China (e-mail: huangch@uestc.edu.cn).
			
			P. Zhang is with the State Key Laboratory of Networking and Switching Technology, Beijing University of Posts and Telecommunications, Beijing, China, 100876. Email: pzhang@bupt.edu.cn.

		}
	}
	\maketitle
	\begin{abstract}
		Semantic communication (SemCom)  has emerged as a new paradigm to facilitate the performance of integrated sensing and communication  systems in 6G, due to its potential to enhance transmission efficiency by transmitting task-relevant semantic features rather than raw bits. However, most of the existing works mainly focus on sensing data compression to reduce the subsequent communication overheads, without considering the integrated transmission framework for both the SemCom and sensing tasks. 
		This paper proposes a sensing-aware adaptive source-channel coding (SA-ASCC) and beamforming design framework for bi-static integrated sensing and SemCom  (ISSC) systems by jointly optimizing the coding rate for SemCom task and the transmit beamforming for both the SemCom and sensing tasks. Specifically, an end-to-end semantic distortion function is approximated by deriving an upper bound composed of source and channel coding induced components, and then a hybrid Cram\'{e}r-Rao bound (HCRB)   is  derived for target position under imperfect time synchronization due to the transceiver deployed at different places in our considered bi-static ISSC system. To characterize the achievable region between SemCom and sensing performance, a distortion minimization problem is formulated by considering the HCRB threshold, channel uses, and  power budget, which is non-convex due to the coupled design variables and the mixed-integer program. Subsequently, an  alternating optimization (AO) algorithm is proposed to decompose this problem into the model selection and joint rate and beamforming optimization subproblems, which are solved by the exhaustive search method and the combination of successive convex approximation and fractional programming, respectively. 
		Finally, simulation results demonstrate that our proposed scheme outperforms the conventional  deep joint source-channel coding (DJSCC)-water filling (WF)-zero forcing (ZF) and BPG-WF-ZF  benchmarks.
	\end{abstract}
	\begin{IEEEkeywords}
		Semantic communication (SemCom), adaptive source-channel coding (ASCC), integrated sensing and SemCom (ISSC),  Cram\'{e}r-Rao bound (CRB), beamforming design.
	\end{IEEEkeywords}
	\section{Introduction}
	The sixth-generation (6G) communication systems are expected to support various intelligent applications, such as autonomous driving, virtual/augmented reality (VR/AR), and smart manufacturing, which impose  stringent demands for  high-speed transmission,  ubiquitous connectivity, and high-precision sensing   in integrated sensing and communication (ISAC) systems \cite{ 9737357, 10680280}. As the  massive volume of raw sensing data is usually generated in these intelligent applications,   semantic communication (SemCom), a new paradigm  to extract and transmit the latent  features of sensing data via deep neural network (DNN) instead of raw  bit, has been proposed to reduce the intractable transmission  overheads in ISAC systems \cite{11320919}. However, the inherent functional and operational differences between sensing and communications pose new design challenges for the integrated transmission framework to support both the SemCom and sensing tasks.
	
	Generally, SemCom-related researches are  divided into two categories, i.e., separate source-channel coding (SSCC)\cite{6316136,10175391}, which performs source and channel coding independently, and joint source-channel coding (JSCC)\cite{10387242,10747747,8054694,9398576,10702555}, which directly encodes the source data into channel input symbols for transmissions. Theoretical analysis has shown that SSCC can achieve optimality under infinite blocklength case\cite{10387242}, while JSCC has been  proposed for its advantage in practical finite blocklength scenarios\cite{10747747}. Recently, DNN as a promising technology  has been widely adopted into SemCom systems to implement JSCC with practical data sources \cite{8054694}. For text transmission, the authors in \cite{9398576} proposed a transformer-based deep JSCC (DJSCC) scheme  in  SemCom system, which demonstrated superior performance in low signal-to-noise-ratio (SNR) region compared to the
	conventional communication. For image transmission, the authors in\cite{10702555} proposed a DJSCC-based multi-task  SemCom framework by utilizing the convolutional autoencoder to extract the feature of  image  for autonomous vehicles, which also showed efficient image reconstruction and classification performance.  However, the above  DJSCC-based schemes directly encode the semantic features  into analog symbols, which is incompatible with the digital communication systems in practice.

	To overcome the above challenges, recent studies have focused on incorporating the quantization and the constellation mapping modules into  DJSCC\cite{9998051,10495330}. The authors in \cite{9998051} proposed a DJSCC-Q scheme for single Gaussian channel scenario by  introducing a quantization layer to generate symbols from finite constellations for image transmission, which achieves nearly identical performance to traditional DJSCC. To adapt  different channel conditions, the authors in \cite{10495330 } proposed a variational autoencoder (VAE) based joint coding-modulation scheme, which learns the probability distribution of constellation points under  varied SNRs to guarantee the transmission robustness. To further avoid the training overhead due to varied tasks and channels,   an adaptive source and channel coding (ASCC) architecture driven by the channel state, which is composed of semantic source coding based on DNNs and digital channel coding, has recently been developed for digital SemCom systems \cite{10845799,li2025adaptivesourcechannelcodingsemantic,yuan2025adaptivesourcechannelcodingmultiuser}. The authors in \cite{10845799} proposed a source-channel rate adaptation scheme for image transmission over a single Gaussian channel, which jointly optimizes the source and channel coding rates to minimize the E2E distortion and effectively mitigate the cliff effect inherent in the conventional separation-based architecture. Under the similar architecture, the authors in  \cite{li2025adaptivesourcechannelcodingsemantic} proposed a joint optimization design of source-channel coding rate and power allocation over parallel Gaussian channels  to minimize the weighted average E2E distortion for both image reconstruction and classification task. Furthermore, the authors in \cite{yuan2025adaptivesourcechannelcodingmultiuser} proposed a multiuser semantic and data transmission framework that jointly optimizes the coding rates, power allocation, and transmit beamforming. However, in the emerging applications with challenging propagation environments and limited wireless resources, such as low-altitude UAV inspection\cite{11328937} and vehicular cooperative sensing\cite{11514424}, existing ASCC architectures designed for the above communication-only tasks cannot effectively address the joint requirements of semantic transmission and environmental sensing.

On the other hand, ISAC as one of the key technologies in 6G achieves both the sensing and communication functions by sharing the wireless resources and hardware platforms\cite{9540344,11030566}. Most of the studies have concentrated on mono-static ISAC system to investigate the integrated sensing and communication gains  \cite{9652071,8386661}, while it may cause strong self-interference (SI) due to the transceivers deployed at the same place.  Recently, most of the research has shifted the paradigm to bi-static cases, which deploy the transceiver  at different places to avoid  strong SI\cite{10281382, 11155148}. The authors in\cite{10281382} designed the transmit waveform in bi-static ISAC systems  by  minimizing the symbol error rate (SER) under the Cramér-Rao Bound (CRB) constraint on angle  estimation, which achieved better SER performance and improved robustness to nonlinear phase noise. To enhance sensing performance, the authors in \cite{11155148}  minimized the CRB on target position estimation under communication SINR and transmit power constraints by optimizing transmit beamforming in  bi-static ISAC systems. However, all the above mentioned bi-static related  works considered an ideal condition with perfect time synchronization (TS), which is impossible for  bi-static  systems in practice. To investigate the impact of imperfect TS on ISAC  performance, the authors in \cite{10380513}  characterized it by unknown clock offsets in networked ISAC systems, which prevent accurate delay estimation  of the cross-link echo over the base station (BS)-target-other BS path and only the target-reflected signal over the direct link can be exploited for joint detection, thereby degrading the target detection performance.  More specifically, the  authors in \cite{10684491} modeled the TS error as
a Gaussian random variable and derived the CRB on position
estimation with TS error, which showed that the boundary of
achievable performance region degrades with the increasing of
TS errors.
	
	Recently, the combination of SemCom into ISAC systems has  attracted extensive attention from both the academic and industrial spheres\cite{10750351,jia2024infrastructureassistedcollaborativeperceptionautomated,10417099}. The authors in \cite{10750351} proposed a semantic-based sensing data feedback scheme for aerial intelligent transportation scenarios, where a semantic database is utilized to extract high-level sensing features, thereby reducing feedback overhead and latency.  Similarly, the authors in \cite{jia2024infrastructureassistedcollaborativeperceptionautomated} proposed a four-step compression scheme for autonomous driving scenarios, which effectively reduces bandwidth requirements without compromising sensing accuracy. Furthermore, the authors in \cite{10417099} proposed an attention-based sensing feature prioritization scheme  in autonomous driving scenarios to prioritize compressed sensing data, which  enhances communication reliability.   
	However, most of the above mentioned  studies have concentrated on  sensing data compression to reduce the transmission overhead, without considering the design of  integrated transmission framework for jointly serving the SemCom and sensing tasks.

	 To address the above issues, this paper proposes a sensing-aware adaptive source-channel coding (SA-ASCC) scheme for bi-static integrated SemCom and sensing (ISSC) systems with imperfect TS, which  comprise one transmitter (TR), one target, and two receivers (REs). The TR transmits the  ISSC signal, which superimposes the SemCom data  with the sensing  data via beamforming technique for both the SemCom and sensing tasks, thereby inherently inducing a performance trade-off between these two tasks with shared transmission resources. As  SemCom typically performs more effectively in low-SNR region, a larger proportion of resources could be allocated for sensing, which motivates us to investigate  the potential  integrated performance gain in our considered bi-static ISSC system to obtain a larger achievable region between the SemCom and sensing tasks. Unlike the conventional ASCC schemes in \cite{10845799,li2025adaptivesourcechannelcodingsemantic,yuan2025adaptivesourcechannelcodingmultiuser}, where source-channel rate adaptation is only determined by communication channel conditions, the integration of sensing tasks introduces additional accuracy requirements that compete for limited resources in our considered system. Consequently, the optimal source-channel coding rates become jointly coupled with sensing requirements and channel conditions,  rendering existing  ASCC schemes inapplicable and necessitating a new SA-ASCC framework beyond existing communication-oriented designs. The main contributions  of this work are summarized as follows:
	
	\begin{itemize}
		\item
		First, we derive the ISSC signal models for transmission at the integrated  TR and reception at the sensing RE (SRE) and communication RE (CRE), respectively. Based on the derived signal models, we propose an integrated transmission framework for both the SemCom and sensing tasks, where the communication information undergoes semantic source encoding, channel encoding, and modulation, and then is  superimposed with the modulated  sensing probing data via beamforming technique.
		\item
		Then, an E2E semantic  distortion function is derived by a regression-based fitting method for any fixed source coding rate. To investigate the impact of TS errors on ISSC
		performance, we also derive the hybrid CRB (HCRB)
		for target position under imperfect TS  by modeling the TS error as a real-valued random variable with a known prior distribution.

		\item Finally, to characterize the achievable region between SemCom and sensing, an E2E distortion minimization problem is formulated by considering the  HCRB threshold, channel uses, and power budget, which is non-convex due to the coupled design variables and the mixed-integer program. To address this problem,  an alternating optimization (AO) algorithm  is proposed to decompose it into the  model selection and the joint rate and beamforming optimization subproblems, which are solved by the exhaustive search method and the combination of successive convex approximation (SCA) with fractional programming (FP), respectively.    
	\end{itemize} 
	
	The remainder of this paper is organized as follows. Section \uppercase\expandafter{\romannumeral2} introduces the system model. In Section \uppercase\expandafter{\romannumeral3}, we formulate the ISSC problem. In Section \uppercase\expandafter{\romannumeral4}, we  propose the algorithm to address  problem. Section  \uppercase\expandafter{\romannumeral5} provides the numerical simulation results. Section \uppercase\expandafter{\romannumeral6} concludes this paper.
	
	\emph{Notations}: 
	Bold upper-case and lower-case letters, e.g., $\mathbf{X}$ and $\mathbf{x}$, denote matrices and vectors, respectively.
	$\operatorname{Tr}(\cdot)$ and $\mathrm{vec}(\cdot)$ denote the trace and vectorization operators.
	$\mathbb{E}\left(\cdot\right)$ and $\mathbb{I}_{\{\cdot\}}$ indicate expectation and the indicator function.
	$\log_2(\cdot)$, $\log_{10}(\cdot)$, $|\cdot|$, $\|\cdot\|_F$, $\mathbf{I}_N$, $\min\{\cdot\}$, $j^2=-1$, and $\Re(\cdot)$ denote the base-2 logarithm, base-10 logarithm, Euclidean norm, Frobenius norm, identity matrix with $N$ dimensions, minimum operator, imaginary unit, and real-part operator, respectively.
	\begin{figure*}[!t] 
		\vspace{-1mm}
		\setlength{\belowcaptionskip}{-20pt}
		\centering
		\includegraphics[width=1\textwidth]{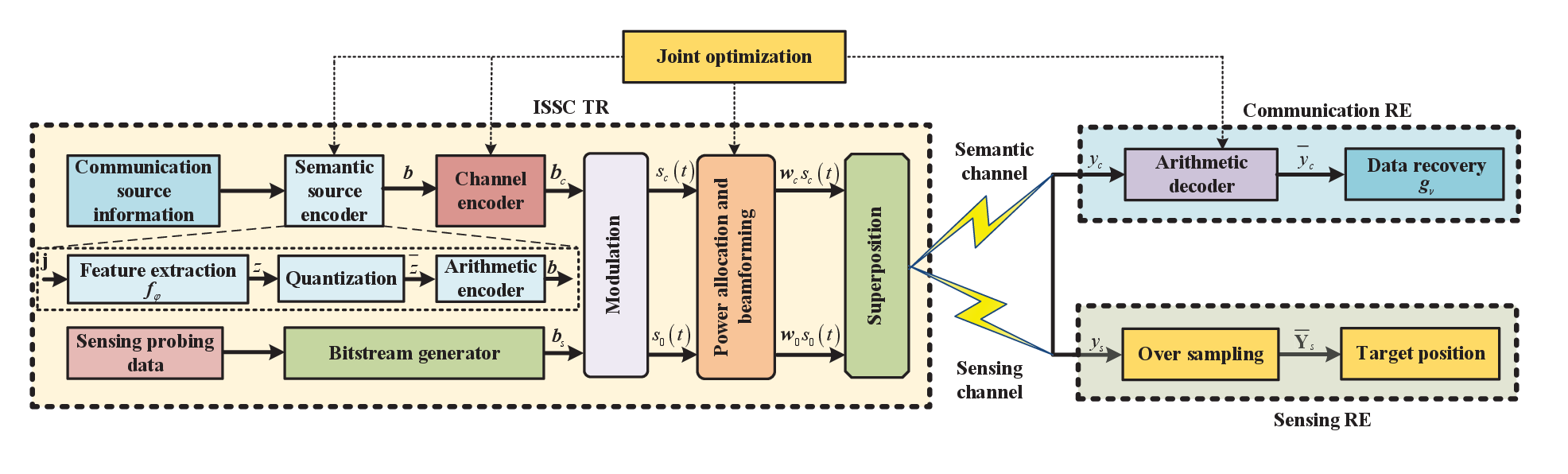} 
		\vspace{-10mm}   
		\caption{Framework of the ISSC System.}
		\label{fig:e}
	\end{figure*}
	\section{System Model}
	In this paper, we consider a bi-static ISSC system consisting of one transmitter (TR), one target, and two  REs, as shown in Fig. \ref{fig:e}. The TR  transmits the ISSC signal to both the target and  REs. The SRE receives the  signal reflected from the target for sensing task (e.g., position), and the CRE directly  receives the  signal from the TR for the SemCom task, (e.g., image reconstruction). Here,  the TR and the SRE are considered to be equipped with $N_T$ and $N_R$ antennas, respectively, while the CRE is considered to be equipped with a  single antenna. The positions of the  TR, target, CRE and  SRE are denoted as $\mathbf{p}_{b}=\left[b_x,b_y\right]^{T}$, $\mathbf{p}_{o}=\left[o_x,o_y\right]^{T}$, $\mathbf{p}_{c}=\left[c_x,c_y\right]^{T}$	and $\mathbf{p}_{r}=\left[r_x,r_y\right]^{T}$, respectively. 
	
	Based on the above setup, the communication source undergoes semantic source encoding, channel encoding,  modulation, and then is mapped into SemCom symbols. The sensing probing data is transformed into the sensing bitstream via a bitstream generator, and the bitstream is then transformed into the corresponding sensing symbols after modulation, as in the SemCom case. Then, both the SemCom and sensing symbols are superimposed into an ISSC signal via the  beamforming technique to perform tasks at the REs.   As  SemCom and sensing share a common TR  in the considered bi-static  ISSC system, it is necessary  to characterize the performance tradeoff by designing the coding rate and beamforming schemes for both the SemCom and sensing tasks. The detailed transmission and reception processes  for different tasks will be introduced in the sequel.
	\subsection{Transmission at  ISSC TR}
	\subsubsection{SemCom Task}
	For the SemCom task, the observation vector $\mathbf{j}\in\mathbb{R}^{M_1}$ is compressed at the TR   using the semantic source encoder, and   is finally reconstructed at the CRE. Specifically, the input data $\mathbf{j}$ is mapped to a $K$-dimensional vector $\boldsymbol{z}\in\mathbb{C}^{K}$  through the DNN-based feature extraction function $\boldsymbol{f}_{\varphi}$ with parameter $\varphi$, i.e., $\boldsymbol{z}=\boldsymbol{f}_{\varphi}(\mathbf{j})$. For lossy compression, $K$ is much smaller than $M_1$. Then, $\boldsymbol{z}$ is quantized as $\overline{\boldsymbol{z}}=\lfloor \boldsymbol{z}\rceil$, where $\lfloor\cdot \rceil$ denotes the uniform scalar quantization operation \cite{9242247}. Next, $\overline{\boldsymbol{z}}$ is encoded as a semantic  bitstream $\mathbf{b}$ containing semantic information  with a total length of $B$ based on its probability mass function $\boldsymbol{P}_{\overline{\boldsymbol{z}}}\left(\overline{\boldsymbol{z}}\right)$ by applying a lossless entropy coding method, i.e., arithmetic coding \cite{10.1145/214762.214771}. The expected source coding rate $R_s$ is calculated as $R_s\triangleq\mathbb{E}\left\{-\log_{2}\boldsymbol{P}_{\overline{\boldsymbol{z}}}\left(\overline{\boldsymbol{z}}\right)\right\}$. To protect against channel errors, the bitstream $\mathbf{b}$ is partitioned into $I=\lceil\frac{B}{N_c} \rceil$ blocks  and then  transformed into $\mathbf{b}_c$ via the block-wise channel coding, with each $N_c$-bit block being modulated (e.g., QPSK, QAM) into an $L$-length symbol vector. Subsequently, the SemCom symbol vector~$\mathbf{s}_c \in \mathbb{C}^N$ is obtained, with $N=IL$. 
	Hence, the channel coding rate is calculated as $R_{c}=\frac{N_c}{L}$, and the average number of channel uses per source sample is defined as $E_a=\frac{R_s}{R_c}$. The metric measuring the average channel uses per dimension of $\mathbf{j}$ is defined as the average bandwidth ratio (ABR), with $\text{ABR}=\frac{E_a}{M_1}$. 

	\subsubsection{Sensing Task}
	For the sensing task,  a bitstream generator is adopted to generate the sensing bitstream $\mathbf{b}_0$, which is also mapped into a symbol vector $\mathbf{s}_0\in\mathbb{C}^{N}$ through the same modulation process  used in the SemCom task.
	Then, both the SemCom and sensing symbols, i.e., $\mathbf{s}_c$ and $\mathbf{s}_0$,  
	are superimposed to form an ISSC signal via  beamforming technique to perform tasks at the REs. For the sensing task, the TR transmits the ISSC signal to the target, and then the SRE receives the echo reflected by the target to estimate the target position.   
	\subsubsection{Transmission Signal}
	Based on the above analysis, the ISSC signal transmitted by the TR in the $t$-th time slot is expressed as
	\begin{equation}\label{1}
		\mathbf{x}{\left(t\right)}=\sum_{u\in\mathcal{U}}\mathbf{w}_{u}(t){s}_{u}{\left(t\right)}, t\in\mathcal{T},
	\end{equation}
	where $\mathcal{U}=\{c,0\}$ is the set of subscripts, with $c$ and $0$ corresponding to the SemCom and sensing tasks, respectively; $\mathbf{w}_{u}(t)\in\mathbb{C}^{N_{T}\times 1}$ denotes the beamforming vector for the $u$-th task; $\mathcal{T}=\{1,\cdots,T\}$, with $T$ being the total number of  time slots; ${s}_u(t)$ is the extracted data from ${\bf s}_u$, i.e.,  
	\begin{equation}\label{2}
		{s}_{u}\left(t\right)=\sum_{n\in\mathcal{N}}{s}_{u}\left[n\right]g\left(t-\left(n-1\right)\Delta t\right),
	\end{equation}
	where ${s}_{u}\left[n\right]$ is the $n$-th symbol of ${\bf s}_u$, and $g(t)$ denotes the transmit pulse function with domain $g(t)\in[0,\Delta t]$, satisfying $\frac{1}{\Delta t}{\int_0^{\Delta t}|{g}(t)|^2dt}=1$\cite{11240213}. Notably, a rotationally invariant constellation $\mathcal{S}$ with zero mean and unit power is considered in our ISSC system, i.e.,
	\begin{equation}\label{3}
		\begin{aligned}
			&\mathbb{E}\left(\mathbf{s}_u\right)=0,  \mathbb{E}\left(\left|\mathbf{s}_u\right|^2\right)=1, \forall {\bf s}_u\in\mathcal{S}, u\in\mathcal{U},\\
		\end{aligned} 
	\end{equation}
	where  most commonly employed constellations, e.g., QPSK and QAM, satisfy these criteria in (\ref{3}) \cite{liu2025sensingcommunicationsignalsinformation}. Then, the transmission power at the ISSC TR is calculated as
	\begin{equation}\label{4}
		\mathbb{E}\left(\left|\left| \mathbf{x}{\left(
			t\right)}\right|\right|^{2}\right)=\operatorname{Tr}\left(
		\mathbf{W}_{c}+\mathbf{W}_0\right)
		=\operatorname{Tr}\left(\boldsymbol{\Sigma}\right)\leq\ P_{T},
	\end{equation}
	where $\boldsymbol{\Sigma}=\mathbf{W}_{c}+\mathbf{W}_{0}$, with $\mathbf{W}_{c}=\mathbf{w}_{c}\mathbf{w}_{c}^{H}\in\mathbb{C}^{N_{T}\times N_{T}}$ and $\mathbf{W}_{0}=\mathbf{w}_{0}\mathbf{w}_{0}^{H}\in\mathbb{C}^{N_{T}\times N_{T}}$ representing the covariance matrices for the SemCom and sensing tasks, respectively. Moreover, $P_T$ is the power budget at the TR.
	\vspace{-0.5cm}
	\subsection{Receptions at  Two REs}
	Based on the above analysis, the reception processes for the SemCom and sensing tasks are introduced in this section. 
	\subsubsection{Received Signal at  CRE}
	For the SemCom task, the TR directly transmits the signal ${\bf x}(t)$ given in (\ref{1}) to the CRE, and  the received signal $ y_{c}\left(t\right)$ at the CRE is given by
	\begin{align}\label{equ: received signal1}
		y_{c}\left(t\right)&=\mathbf{h}_{c}^{H}\mathbf{x}\left(t\right)+n\left(t\right)\nonumber\\
		&=\mathbf{h}_{c}^{H}\mathbf{w}_{c}{s}_{c}\left(t\right)+\mathbf{h}_{c}^{H} \mathbf{w}_{0}{s}_{0}\left(
		t \right)+n\left(t\right),
	\end{align}
	where $\mathbf{h}_{c}=\sqrt{\eta}_c\boldsymbol{\alpha}\left(\varphi\right)\in\mathbb{C}^{N_{T}\times 1}$ represents the channel vector of the direct link from the TR to the CRE, with $\eta_c=d_c^{-\epsilon_1}$ representing the path loss,  $\epsilon_1$ denoting the path loss exponent, and $d_c$ denoting the distance between the TR and the CRE; $\boldsymbol{\alpha}\left(\cdot\right)=\left[
	1, e^{j2\pi\frac{d_a}{\lambda}sin\left(\cdot\right)},\cdots,e^{j2\pi\left(N_{T}-1\right)\frac{d_a}{\lambda}sin\left(\cdot\right)}\right]^{T}$ is the steering vector, with $\lambda$ being the carrier wavelength, and $d_a$ being the distance between  two adjacent antennas;  $\varphi=\arctan\left(\frac{c_y-b_y}{c_x-b_x}\right)+\mathbb{I}_{\{c_x<b_x\}}\pi$ is the direction of departure (DoD), with $\mathbb{I}_{\{\cdot\}}$ being the indicator function; $n\left(t\right)$ is the   circularly symmetric complex Gaussian (CSCG) noise satisfying $n\left(t\right)\sim\mathcal{CN}\left(0,\sigma_{n}^{2}\right)$.
	Here, a block fading scenario is considered, where the channel coefficients remain constant within one transmission block and vary across different blocks.
	Based on (\ref{equ: received signal1}), the signal-to-interference-plus-noise ratio (SINR) at the CRE is given by
	\begin{equation}\label{6}
		\gamma=\frac{\left|\mathbf{h}_c^{H}\mathbf{w}_c\right|^2}{\left|\mathbf{h}_c^{H}\mathbf{w}_0\right|^2+\sigma_{n}^2}.
	\end{equation}
	
	Then, the decoded signal $\overline{y}_{c}$ is input to the DNN-based function $\boldsymbol{g}_{\nu}\left(\overline{y}_{c}\right)$  to recover the transmitted data $\mathbf{j}$, i.e., $\overline{\mathbf{j}}=\boldsymbol{g}_{\nu}\left(\overline{y}_{c}\right)$, where $\overline{\mathbf{j}}$ represents the recovered data. 
	From the well-known finite blocklength transmission theory, the lower bound of the average BER with random coding is approximated as \cite{li2025adaptivesourcechannelcodingsemantic}
	\begin{equation}\label{7}
		\rho_b\ge\frac{Q\left(\mathbf{ln}_2\frac{\sqrt{L}\left(C\left(\gamma\right)-R_c\right)}{B\left(\gamma\right)}\right)}{R_c L},
	\end{equation}
	where $Q\left(x\right)=\sqrt{\frac{1}{2\pi}}\int_{x}^{\infty}e^{-\frac{t^2}{2}}dt$, $B\left(\gamma\right)=\sqrt{1-\frac{1}{\left(1+\gamma\right)^2}}$, and $C\left(\gamma\right)=\text{log}_{2}\left(1+\gamma\right)$ denotes the channel capacity with $\gamma$ being given in (\ref{6}). It is obvious that $\rho_b$ is determined by $R_c$ and $\gamma$.

	\subsubsection{Received Signal at   SRE}
	For the sensing task, the TR transmits the ISSC signal ${\bf x}(t)$ to the target, and it is then reflected to the SRE.  Thus, the received signal at the  SRE is given by 
	\begin{align}\label{9}
		\mathbf{y}_{s}\left(t\right)=\mathbf{H}_{0}\mathbf{x}\left(t-\tau\right)+\mathbf{n}_s\left(t\right),	
	\end{align} 
	where $\mathbf{H}_{0}=\sqrt{\beta_{0}}\varrho\,\boldsymbol{\alpha}(\phi)\boldsymbol{\alpha}(\theta)^{H}\in\mathbb{C}^{N_{R}\times N_{T}}$ is the channel  matrix of the TR–target–SRE link, with $\beta_{0}=d_{s}^{-\epsilon_2}$ denoting the path loss, $d_{s}$ being the distance along the TR–target–SRE path, and $\epsilon_2$ being the path-loss exponent; $\varrho$ is the reflection coefficient; $\phi$ and $\theta$ denote the direction of arrival (DoA) at the SRE from the target and  the DoD from the TR to the target, respectively, and are defined as  $\phi=\arctan\left(\frac{o_y-r_y}{o_x-r_x}\right)+\mathbb{I}_{\{o_x<r_x\}}\pi$ and $\theta=\arctan\left(\frac{o_y-b_y}{o_x-b_x}\right)+\mathbb{I}_{\{o_x<b_x\}}\pi$;
	$\tau$ is the transmission delay of the TR--target--SRE link; $\mathbf{n}_{s}(t)$ is the received CSCG noise satisfying $\mathbf{n}_{s}(t)\sim\mathcal{CN}(0,\sigma_{s}^{2}\mathbf{I}_{N_{R}})$.
	As the TR and SRE are deployed at different positions, imperfect TS inevitably exists and cannot be ignored in the considered bi-static ISSC system\cite{10684491}. Consequently, the  transmission delay of the TR–target–SRE link under imperfect TS is modeled as 
	\begin{align}\label{10}
		\tau=\tau_{t,r}+\Delta\tau_{t,r},
	\end{align}
	where $\tau_{t,r}$ is the transmission delay under perfect TS, and  
	$\Delta	\tau_{t,r}$ is the additional TS error. Without loss of generality, $\Delta{\tau_{t,r}}$ is modeled as a Gaussian random variable satisfying $\Delta{\tau_{t,r}}\sim\mathcal{N}\left(0,\sigma_{T}^{2}\right)$, where the variance~$\sigma_T^2$  is known at the SRE\cite{10684491}.
	\vspace{-0.3cm} 
	\section{ISSC Problem Formulation}
	In this section,  an E2E distortion  is derived as the  performance metric for the SemCom task via a regression-based fitting method, and then an HCRB is obtained as the positioning performance metric for the sensing task under imperfect TS. To characterize the achievable region between SemCom and sensing, an E2E distortion minimization problem is formulated by considering the HCRB threshold, channel uses, and power budget.  
	\subsection{E2E Distortion  for  SemCom Task}
	Without loss of generality,  the distortion function $D_{o}$ is modeled as the sum of source and channel distortions\cite{10387242}, i.e., 
	\begin{equation}\label{11}
		D_{o}\approx D_{o}^{s}\left(R_s\right)+D_{o}^{c}\left(R_s, \rho_{b} \right), 
	\end{equation}
	where $D_{o}^{s}\left(R_s\right)$ and $D_{o}^{c}\left(R_s, \rho_{b} \right)$ represent the source  and channel distortions, respectively. Note that $D_{o}^{s}\left(R_s\right)$ varies with the source coding rate $R_s$, which is determined by the DNN parameters $\varphi$ of the semantic source encoder. $D_{o}^{c}\left(R_s, \rho_{b} \right)$ is jointly determined by the source coding rate $R_s$ and the BER $\rho_{b}$ defined in (\ref{7}). As the direct analytical modeling of channel distortion $D_{o}^{c}\left(R_s, \rho_{b} \right)$ by the closed-form expression is usually intractable due to its intricate dependence on both the unattainable Lipschitz constant  of the DNN  and  BER $\rho_{b}$ \cite{10845799}, a regression-based fitting method is utilized to model the E2E distortion $D_o$, where  logistic regression  is used to characterize its dependence  on the BER $\rho_{b}$. The detailed steps are introduced as follows. 
	\subsubsection{Model Design}
	$G$ hyper-prior-based DNN models, corresponding to a set of discrete source rates $\mathcal{R}_{s}=\left\{R_{s,1},...,R_{s,G}\right\}$,  are obtained by minimizing the objective function $\lambda D_{o} + R_{s}$ across varying $\lambda$ values under error-free conditions\footnote{All the DNN models undergo offline pre-training, obviating the requirement for online training and substantially alleviating computational demands.} \cite{Balle2018Variational}. Then, these hyper-prior-based DNN models are employed to implement feature extraction and image recovery. 
	\subsubsection{Distortion Measures}
	In the regression process, the E2E distortion  is defined as
	\begin{equation}
		D_o=\mathbb{E}_{\mathbf{j}}\left\{d_{o}\left(\mathbf{j},\hat{\mathbf{j}}\right)\right\},
	\end{equation} 
	where the  mean square error (MSE) is utilized to measure the distortion for the SemCom task, (e.g., image recovery), i.e., $	d_{o}\left(\mathbf{j},\overline{\mathbf{j}}\right)=\frac{1}{M_1}\left|\left|\mathbf{j}-\overline{\mathbf{
			j}}\right|\right|_2^2$. 
	\subsubsection{Distortion Approximation}
	We approximate the E2E distortion $D_o$ via the  regression-based fitting method on the Caltech-UCSD Birds-200-2011 (CUB-200-2011) dataset.
	Specifically, each image is compressed into a bitstream $\boldsymbol{b}$ by the semantic encoder, which is then encoded into $\boldsymbol{b}_c$ by the channel encoder. To simulate channel errors, the received bitstream $\hat{\boldsymbol{b}}_c$ is obtained by randomly flipping  bits in $\boldsymbol{b}_c$ according to a specified BER $\rho_b$.
	Then, the received bitstream $\hat{\boldsymbol{b}}_c$ is decoded to evaluate the reconstruction distortion.
	Finally, $D_o$ is obtained for each combination of the BER $\rho_b$ and source coding rate $R_s$
	by averaging the distortions over the entire dataset.

	Based on the above setup, we plot  the E2E distortion in logarithmic form, i.e., $\log_{10} D_{o,g}$ for $g\in\left\{1,2,...,G\right\}$, versus $\log_{10} \rho_{b}$ and $R_{s,g}$ in Fig.~\ref{2}, which shows  $\log_{10} D_{o,g}$ exhibits a sigmoidal behavior with respect to $\log_{10} \rho_{b}$\footnote{For simplicity, the subscript $g$ is omitted in the following analysis.}.  Consequently, a generalized logistic function is utilized to approximate $\log_{10} D_{o}$ as a function of $\log_{10} \rho_{b}$, i.e.,
	\begin{equation}\label{13_1}
		\log_{10} D_{o} \triangleq \hat{d_{o}^{s}}\left(R_{s}\right)+\frac{\hat{d_{o}^{c}}\left(R_s\right)}{1+e^{-a_{1}^{mse}\left(R_s\right)\left(\log_{10} \rho_{b}-a_{2}^{mse}\left(R_s\right)\right)}},
	\end{equation} 
	where $\hat{d_{o}^{s}}\left(R_s\right),\hat{d_{o}^{c}}\left(R_s\right), -a_{1}^{mse}\left(R_s\right)$, and $a_{2}^{mse}\left(R_s\right)$ are the fitting parameters   to be determined. 
	\begin{remark}
		As illustrated in Fig.~\ref{2},  when the BER $\rho_b$ falls below a certain threshold, the E2E distortion converges to a minimum floor and remains nearly constant. This is due to the fact that the transmission can  be regarded as error-free and the E2E distortion is primarily determined by the source distortion. Meanwhile, the distortion floor decreases with the increasing of source coding rate $R_s$. Hence, it is  concluded that  the source distortion  $D_{o}^{s}\left(R_s\right)$ decreases with the increasing of  the source coding rate $R_s$. Moreover,  it is also  observed  that
		a larger source coding rate $R_s$ steepens the distortion-BER slope, which demonstrates that a larger $R_s$ makes the system more sensitive to channel errors and thus increases the channel distortion. Based on the above findings, there exists a trade-off between source coding rate $R_s$ and BER $\rho_b$. 
	\end{remark}
	\vspace{-0.1cm}
	\subsection{HCRB Derivation for Sensing Task}
	For the sensing task, the target position $\mathbf{p}_o=\left[o_x,o_y\right]^{T}$ is the parameter of interest, while the TS error $\Delta \tau_{t,r}$ is modeled as a random nuisance parameter with a known prior distribution. To incorporate the influence of the TS error into the positioning bound, we define the  hybrid parameter vector as $\boldsymbol{\chi}=\left[\mathbf{p}_o^{T}, \Delta \tau_{t,r}\right]$. Then, the HCRB, a theoretical limit on the estimation variance of $\mathbf{p}_o$, is adopted as the positioning performance metric in the considered bi-static ISSC system. 
		\begin{figure}[!t]
		\setlength{\belowcaptionskip}{-23pt}
		\setlength{\abovecaptionskip}{0.1cm}
		\centering
		\vspace{-1em}
		\includegraphics[width=2.7in]{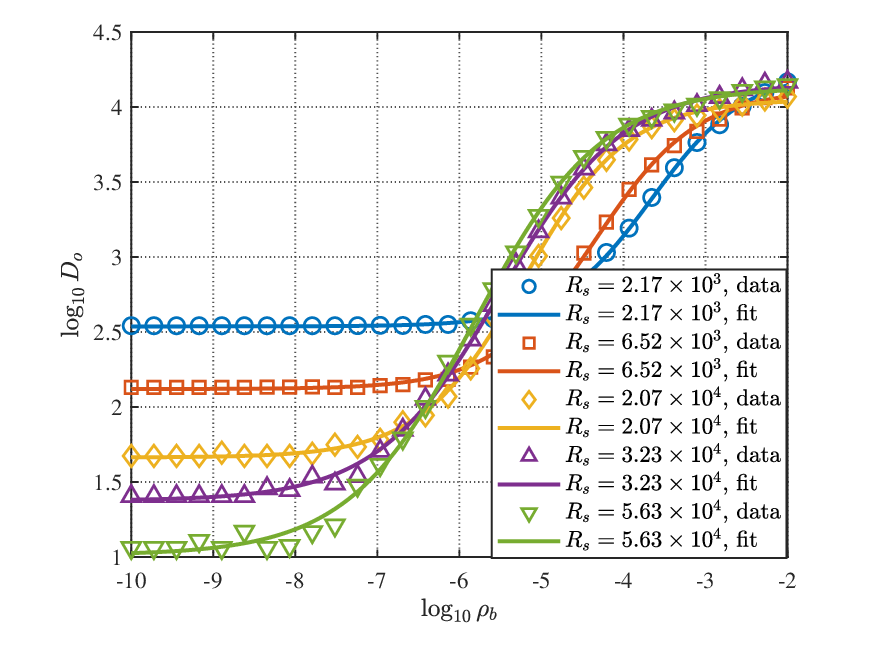}
		\caption{E2E distortion $D_o$ with respect to $\rho_b$ under different $R_s$.}
		\label{2}
	\end{figure}
	To simplify the analysis, the continuous received signal $\mathbf{y}_{s}\left(t\right)$ given in (\ref{9}) is 
	sampled at an interval of $\triangle t$, and then $N$ independent observations $\overline{\mathbf{Y}}_s=\left[\overline{\mathbf y}_s\left(t_1\right),\overline{\mathbf y}_s\left(t_2\right),...,\overline{\mathbf y}_s\left(t_N\right)\right]$ are obtained at the SRE, with $t_n=n\Delta t$ and $n\in\mathcal{N}=\left\{1,2,\cdots,N\right\}$. For analytical convenience, the sampled signal is further mapped into the frequency domain via the discrete Fourier transform (DFT), and then the signal in the frequency domain corresponding to the $k$-th $\left(k\in\left[1,2,\dots,N\right]\right)$ sample is given by
	\begin{equation}\label{15}
		\widetilde{\mathbf{y}}_{s}\left(w_k\right)=\mathbf{H}_{0}\widetilde{\mathbf{x}}\left(w_k\right)+\widetilde{\mathbf{n}}_{s}\left(
		w_k\right),
	\end{equation} 
	where $\widetilde{\mathbf{y}}_{s}\left(w_k\right)$ is the $k$-th frequency-domain sample of the received signal, and $\widetilde{\mathbf{x}}\left(w_k\right)=e^{-jw_{k}\tau}\left(
	\mathbf{w}_{c}\widetilde{s}_c\left(w_k\right)+\mathbf{w}_{0}\widetilde{s}_0\left(w_k\right)\right)$, with $w_{k}=\frac{2\pi \left(k-1\right)}{N\Delta t}$. Here, $\widetilde{s}_c\left(w_k\right)=\frac{1}{\sqrt{N}}\sum_{n=1}^{N}s_{c}\left[n\right]e^{-j2\pi \left(k-1\right)\frac{n-1}{N}}$, $\widetilde{s}_0\left(w_k\right)=\frac{1}{\sqrt{N}}\sum_{n=1}^{N}s_{0}\left[n\right]e^{-j2\pi \left(k-1\right)\frac{n-1}{N}}$, and $\widetilde{\mathbf{n}}_{s}\left(
	w_k\right)$  denote the DFT of $s_c\left(t_k\right)$, $s_0\left(t_k\right)$ and $\mathbf{n}_{s}\left(t_k\right)$, respectively. Then, the  received signal at the SRE in the frequency domain is rewritten as $\widetilde{\mathbf{Y}}_s=\mathbf{H}_{0}\widetilde{\mathbf{X}}+\widetilde{\mathbf{N}}_{s}$, where $\widetilde{\mathbf{X}}=\left[\widetilde{\mathbf{x}}\left(w_1\right),\widetilde{\mathbf{x}}\left(w_2\right),...,\widetilde{\mathbf{x}}\left(w_N\right)\right]\in\mathbb{C}^{N_{T}\times N}$and $\widetilde{\mathbf{N}}_s=\left[\widetilde{\mathbf{n}}_s\left(w_1\right),\widetilde{\mathbf{n}}_s\left(w_2\right),...,\widetilde{\mathbf{n}}_s\left(w_N\right)\right]\in\mathbb{C}^{N_{R}\times N}$. For simplicity, the frequency-domain signal $\widetilde{\mathbf{Y}}_s$ is vectorized as $\widehat{\mathbf{y}}_{s}=	\widehat{\mathbf{x}}+\widehat{\mathbf{n}}_{s}$, where $\widehat{\mathbf{y}}_{s}=\text{vec}\left(\widetilde{\mathbf{Y}}_s\right)$, $\widehat{\mathbf{x}}=\text{vec}\left(\mathbf{H}_{0}\widetilde{\mathbf{X}}\right)$, and $\widehat{\mathbf{n}}_{s}=\text{vec}\big(\widetilde{\mathbf{N}}_s\big)\sim\mathcal{CN}\left(\mathbf{0},\boldsymbol{\Xi}\right)$, with $\boldsymbol{\Xi}=\sigma_{s}^{2}\mathbf{I}_{N_{R}N}$. 
	
	As the TR and SRE are deployed at different places, the TS error $\Delta
	\tau_{t,r}$ is incorporated into the conventional CRB to characterize the positioning performance in the considered ISSC system by introducing the prior FIM of $\Delta
	\tau_{t,r}$ \cite{6877741}. Hence, we define the hybrid Fisher information matrix (FIM)  as the sum of the observed FIM $\mathbf{J}_{D}\left(\boldsymbol{\chi}\right)$ and the prior FIM $\mathbf{J}_{B}\left(\boldsymbol{\chi}\right)$, i.e., $\mathbf{J}\left(\boldsymbol{\chi}\right)=\mathbf{J}_{D}\left(\boldsymbol{\chi}\right)+\mathbf{J}_{B}\left(\boldsymbol{\chi}\right)$. Here, the expression of  $\mathbf{J}_{D}\left(\boldsymbol{\chi}\right)$ is given by \cite{10684491}
	\begin{align}\label{18}
		\mathbf{J}_D\left(\boldsymbol{\chi}\right)=-\mathbb{E}_{\widehat{\mathbf{y}}_{s}|\boldsymbol{\chi}}\left(\frac{\partial^{2}\ln\hat{\boldsymbol{f}}\left(\widehat{\mathbf{y}}_{s}|\boldsymbol{\chi}\right)}{\partial\boldsymbol{\chi}\partial\boldsymbol{\chi}^T}\right),
	\end{align}
	where $\hat{\boldsymbol{f}}\left(\widehat{\mathbf{y}}_{s}|\boldsymbol{\chi}\right)$ denotes the conditional probability density function of the observation $\widehat{\mathbf{y}}_{s}$, i.e.,
	\begin{equation}\label{19}
		\hat{\boldsymbol{f}}\left(\widehat{\mathbf{y}}_{s}|\boldsymbol{\chi}\right)=\frac{\text{exp}\left\{-\overline{\mathbf{y}}_s^{H}\boldsymbol{\Xi}^{-1}\overline{\mathbf{y}}_s\right\}}{\pi^{N_{R}N}\det\left(\boldsymbol{\Xi}\right)}.
	\end{equation}
	where $\overline{\mathbf{y}}_s=\widehat{\mathbf{y}}_{s}-\widehat{\mathbf{x}}$. In addition, the prior FIM $\mathbf{J}_{B}\left(
	\boldsymbol{\chi}\right)$ is expressed as $\mathbf{J}_{B}\left(\boldsymbol{\chi}\right)=-\mathbb{E}_{\Delta\tau_{t,r}}\left(\frac{\partial^{2}\ln\rho\left(\Delta \tau_{t,r}\right)}{\partial\boldsymbol{\chi}\partial\boldsymbol{\chi}^T}\right)$,
	where $\rho(\Delta	\tau_{t,r})$ denotes the prior distribution of $\Delta \tau_{t,r}$, given by $\rho(\Delta	\tau_{t,r}) = \exp(-\Delta	\tau_{t,r}^2 / (2\sigma_T^2)) / (\sqrt{2\pi}\sigma_T)$.
	Based on the above analysis, we present the following proposition for deriving  the hybrid FIM $\mathbf{J}\left(\boldsymbol{\chi}\right)$.
		\begin{figure*}[b] 
		\hrulefill  
		\begin{align}
			\text{HCRB}(\mathbf{p}_o) &=\left(\mathbf{J}_{D}^{c_{1}}-\mathbf{J}_{D}^{c_{2}}\left(\mathbf{J}^{c_{3}}\right)^{-1}\left(\mathbf{J}_{D}^{c_{2}}\right)^{T}\right)^{-1}\tag{18}\label{241}\\
			&=\underbrace{\Bigl(\mathbf{J}_{D}^{c_{1}}\Bigr)^{-1}}_{\text{CRB}\left(\mathbf{p}_o\right)}
			+\underbrace{\Bigl(\mathbf{J}_{D}^{c_{1} }\Bigr)^{-1} 
				\left(\mathbf{J}_{D}^{c_{2} }\right)
				\biggl[ \mathbf{J}^{c_3} 
				- \Bigl(\mathbf{J}_{D}^{c_{2} }\Bigr)^{{T}}
				\Bigl(\mathbf{J}_{D}^{c_{1} }\Bigr)^{-1}
				\mathbf{J}_{D}^{c_{2}} \biggr]^{-1}  
				\Bigl(\mathbf{J}_{D}^{c_{2} }\Bigr)^{{T}}
				\Bigl(\mathbf{J}_{D}^{c_{1} }\Bigr)^{-1}}_{\mathrm{additional\ error\ caused \ by\ TS\ error}}\tag{19}\label{242}.
		\end{align}
		\vspace{-3em}
	\end{figure*}
	\begin{Proposition}\label{3.1}
		The hybrid FIM $\mathbf{J}\left(\boldsymbol{\chi}\right)$ can be equivalently expressed in the following block matrix form by taking the first-order partial derivatives with respect to $\boldsymbol{\chi}$, i.e., 
		\begin{equation}\label{22}
			\mathbf{J}(\boldsymbol{\chi}) =
			\underbrace{
				\begin{bmatrix}
					\mathbf{J}_D^{c_1} & \mathbf{J}_D^{c_2} \\
					\left(\mathbf{J}_D^{c_2}\right)^{T} & J_D^{c_3}
				\end{bmatrix}
			}_{\mathbf{J}_D(\boldsymbol{\chi})}
			+ \mathbf{J}_B(\boldsymbol{\chi}),
		\end{equation}
		where $\mathbf{J}_B\left(\boldsymbol{\chi}\right)=\begin{bmatrix}
			\mathbf{0}_{2\times2}&\mathbf{0}_{2\times1} \\
			\mathbf{0}_{1\times2}&\frac{1}{{\sigma_{T}}^2}
		\end{bmatrix}$. The submatrices in $\mathbf{J}_D\left(\boldsymbol{\chi}\right)$ are given by
		\begin{align}\label{23}
			\mathbf{J}_D^{c_1}=
			\begin{bmatrix}
				J_D^1&J_D^2 \\
				J_D^2&J_D^3
			\end{bmatrix},		                   \mathbf{J}_D^{c_2}=\begin{bmatrix}
				J_D^4\\J_D^5
			\end{bmatrix},
			\mathbf{J}_D^{c_3}=J_D^6,
		\end{align}
		where  $\mathbf{J}_D^{c_1}=\mathbf{J}_D^{\mathbf{p}_o\mathbf{p}_o}$, $\mathbf{J}_D^{c_2}=\mathbf{J}_D^{\mathbf{p}_o\Delta \tau_{t,r}}$, $\mathbf{J}_D^{c_3}=\mathbf{J}_D^{\Delta \tau_{t,r}\Delta \tau_{t,r}}$, and  $J_D^j$ is derived as $J_D^j=\frac{2}{\sigma_{s}^2}\mathcal{R}\left\{\operatorname{Tr}\left(\boldsymbol{\zeta}_{j}\boldsymbol{\Sigma}\right)\right\}$, with  $j\in\boldsymbol{\mathcal{J}}=\left\{1,...,6\right\}$. Here,  each $\boldsymbol{\zeta}_j$ is derived as  $\boldsymbol{\zeta}_{1} =\sum_{n=1}^{N} \left( \dot{\mathbf{H}}_{0}^{n} \right)^{H}  \dot{\mathbf{H}}_{0}^{n}$, $\boldsymbol{\zeta}_{2} =\sum_{n=1}^{N}\left( \dot{\mathbf{H}}_{0}^{n} \right)^{H}  \ddot{\mathbf{H}}_{0}^{n}$, $\boldsymbol{\zeta}_{3} =\sum_{n=1}^{N} \left( \ddot{\mathbf{H}}_{0}^{n} \right)^{H}  \ddot{\mathbf{H}}_{0}^{n}$, $\boldsymbol{\zeta}_{4}=\sum_{n=1}^{N} -j w_{n}\left( \dot{\mathbf{H}}_{0}^{n} \right)^{H} \mathbf{H}_{0}$, $	\boldsymbol{\zeta}_{5}=\sum_{n=1}^{N} -j w_{n}\left( \ddot{\mathbf{H}}_{0}^{n} \right)^{H} \mathbf{H}_{0}$, and $\boldsymbol{\zeta}_{6} =\sum_{n=1}^{N} w_{n}^{2}\left( \mathbf{H}_{0} \right)^{H} \mathbf{H}_{0}$, where
		$\dot{\mathbf{H}}_{0}^{n}=\mathbf{H}_{0}^{'}-jw_{n}\tau^{'}\mathbf{H}_{0}$ and $\ddot{\mathbf{
				H}}_0^n=\mathbf{H}_{0}^{''}-jw_{n}\tau^{''}\mathbf{H}_0$. Moreover,  $\mathbf{H}_{0}^{'}$ and $\tau^{'}$ denote the first-order derivatives of $\mathbf{H}_0$ and $\tau$ with respect to $o_x$, respectively, while $\mathbf{H}_{0}^{''}$ and $\tau^{''}$ denote the first-order derivatives of $\mathbf{H}_0$ and $\tau$ with respect to   $o_y$.
		\vspace{0.1cm}
		\begin{proof}
			Please see Appendix A. \hfill $\blacksquare$
		\end{proof}
			\vspace{0.1cm}
	\end{Proposition}

		\begin{Proposition}
			Based on the above analysis in Proposition 3.1, the HCRB matrix  of the target position, is derived at the bottom of this page,
			where (\ref{241}) is obtained by applying the block matrix inversion formula to the hybrid FIM in (\ref{22}), and  (\ref{242}) is obtained by applying the Woodbury matrix identity.
			Here, $\mathbf{J}^{c_3}=\mathbf{J}_{D}^{c_3}+\frac{1}{\sigma_{T}^2}$, and the second term in (\ref{242}) is the additional  error for the  target position $\mathbf{p}_o$ caused by imperfect TS.  
		\end{Proposition}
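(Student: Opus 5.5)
The plan is to start from the hybrid FIM of Proposition~\ref{3.1} and use the definition of the HCRB for $\mathbf{p}_o$ as the upper-left $2\times 2$ block of $\mathbf{J}(\boldsymbol{\chi})^{-1}$. Since $\mathbf{J}_B(\boldsymbol{\chi})$ is nonzero only in its $(3,3)$ entry, adding it to $\mathbf{J}_D(\boldsymbol{\chi})$ changes only the lower-right scalar block. This gives
\begin{equation*}
\mathbf{J}(\boldsymbol{\chi})=\begin{bmatrix}\mathbf{J}_D^{c_1} & \mathbf{J}_D^{c_2}\\ (\mathbf{J}_D^{c_2})^{T} & \mathbf{J}^{c_3}\end{bmatrix},\qquad \mathbf{J}^{c_3}=J_D^{c_3}+\frac{1}{\sigma_T^2}.
\end{equation*}
Note that $\mathbf{J}^{c_3}>0$ always holds, because $J_D^{c_3}=J_D^6=\frac{2}{\sigma_s^2}\Re\{\operatorname{Tr}(\boldsymbol{\zeta}_6\boldsymbol{\Sigma})\}\ge 0$. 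Here $\boldsymbol{\zeta}_6$ is a nonnegatively weighted sum of the positive semidefinite matrices $\mathbf{H}_0^H\mathbf{H}_0$, and $\boldsymbol{\Sigma}\succeq\mathbf{0}$. The scalar block is therefore invertible regardless of the beamformer.

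Next I apply the standard $2\times 2$ block-inversion (Schur complement) formula. Its upper-left block is the inverse of the Schur complement of $\mathbf{J}^{c_3}$, namely $\bigl(\mathbf{J}_D^{c_1}-\mathbf{J}_D^{c_2}(\mathbf{J}^{c_3})^{-1}(\mathbf{J}_D^{c_2})^{T}\bigr)^{-1}$, which is exactly (\ref{241}). This step needs the Schur complement to be invertible. That follows from assuming $\mathbf{J}(\boldsymbol{\chi})\succ\mathbf{0}$, i.e., identifiability of $\mathbf{p}_o$, which is the standing assumption under which the HCRB is defined.

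To obtain (\ref{242}), I apply the Woodbury identity $(\mathbf{A}-\mathbf{U}\mathbf{C}^{-1}\mathbf{V})^{-1}=\mathbf{A}^{-1}+\mathbf{A}^{-1}\mathbf{U}(\mathbf{C}-\mathbf{V}\mathbf{A}^{-1}\mathbf{U})^{-1}\mathbf{V}\mathbf{A}^{-1}$ with $\mathbf{A}=\mathbf{J}_D^{c_1}$, $\mathbf{U}=\mathbf{J}_D^{c_2}$, $\mathbf{V}=(\mathbf{J}_D^{c_2})^T$ and $\mathbf{C}=\mathbf{J}^{c_3}$. I check this by multiplying the proposed inverse by $\mathbf{A}-\mathbf{U}\mathbf{C}^{-1}\mathbf{V}$ and collecting terms to obtain $\mathbf{I}_2$. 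The first term, $(\mathbf{J}_D^{c_1})^{-1}$, is the CRB of $\mathbf{p}_o$ when $\Delta\tau_{t,r}$ is perfectly known.

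The main obstacle is justifying the interpretation of the second term as a nonnegative additional error, together with the invertibility conditions. I plan to require $\mathbf{J}_D^{c_1}\succ\mathbf{0}$. Then $\mathbf{J}\succ\mathbf{0}$ implies that the other Schur complement $\mathbf{J}^{c_3}-(\mathbf{J}_D^{c_2})^T(\mathbf{J}_D^{c_1})^{-1}\mathbf{J}_D^{c_2}$ is a positive scalar. The second term therefore has the form $\mathbf{v}\mathbf{v}^T/s$ with $s>0$, so it is positive semidefinite. Hence $\text{HCRB}(\mathbf{p}_o)\succeq\text{CRB}(\mathbf{p}_o)$, with equality iff $\mathbf{J}_D^{c_2}=\mathbf{0}$. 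The excess also shrinks as $\sigma_T^2\to 0$, since $\mathbf{J}^{c_3}\to\infty$. This justifies labeling the second term as the penalty caused by imperfect TS.
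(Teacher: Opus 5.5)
Your proposal is correct and follows the paper's route exactly. The paper obtains (\ref{241}) by block-inverting the hybrid FIM with $\mathbf{J}^{c_3}=J_D^{c_3}+1/\sigma_T^2$ in the lower-right entry, and (\ref{242}) by the Woodbury identity, as you do, and it offers no further justification. Your extra checks go beyond the paper and are sound: positivity of $\mathbf{J}^{c_3}$, invertibility of the Schur complement, and the rank-one positive semidefinite form of the TS-induced term. One small point: you need not assume $\mathbf{J}_D^{c_1}\succ\mathbf{0}$ separately, because it is the leading principal block of $\mathbf{J}(\boldsymbol{\chi})\succ\mathbf{0}$ and so follows automatically.
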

		\subsection{Problem Formulation}
		We define the achievable  region $\mathcal{A}\left(\Phi, \Pi\right)$ to describe the achievable region between the E2E distortions $D_o$ and the HCRB for the SemCom and sensing task, i.e.,
		\setcounter{equation}{19}
		\begin{equation}\label{201}
			\resizebox{0.88\linewidth}{!}{$
				\begin{aligned}[b]
					\mathcal{A}(\Phi, \Pi) = 
					\bigg\{ 
					& (\Phi, \Pi): D_o \leq \Phi, \operatorname{Tr}\!\big( \text{HCRB}(\mathbf{p}_o) \big) \leq \Pi, \\
					& \operatorname{Tr}(\boldsymbol{\Sigma}) \le P_T, \frac{R_s}{R_c} \le E_a^{\max}, R_c \le C(\gamma)
					\bigg\},
				\end{aligned}
				$}
		\end{equation}
		where  $(\Phi, \Pi)$ is the distortion-HCRB pair;  $\Pi$ and $E_a^{\text{max}}$ denote the  HCRB threshold and the maximum channel uses, respectively; $C(\gamma)$ is the channel capacity given in (\ref{7}). Then, to characterize the achievable region in (\ref{201}),  an  E2E distortion  minimization problem is formulated, subject to   the HCRB threshold, channel uses, and power budget, i.e., 
		\begin{align}
			\left(\text{P3.1}\right)\mathop{\text{min}}\limits_{\left\{R_s\in\mathcal{R}_s, R_c,\mathbf{w}_0,\mathbf{w}_c\right\}} ~~~~&~D_o \label{261}\\
			\text{s.t.}~~~~~~~~~~&~\operatorname{Tr}\left\{\text{HCRB}\left(\mathbf{p}_o\right)\right\} \leq\Pi \label{2711}, \\
			~~~~&~\frac{R_s}{R_c}\leq E_a^{\text{max}}\label{27},\\
			~~~~&~  R_c\leq C\left(\gamma\right) \label{28},\\
			~~~~&~\operatorname{Tr}\left(\boldsymbol{\Sigma}
			\right)\leq\ P_{T} \label{29},
		\end{align}	
		where the objective function  (\ref{261}) and the constraint  (\ref{2711}) are non-convex due to the coupled design variables. Moreover, since the design variable $R_s$ is discrete, Problem (P3.1) is a mixed-integer program, which is also intractable in general. Therefore, it is necessary to design an efficient algorithm to address this problem.  
		\section{Algorithms}
		In this section, an AO algorithm  is proposed to decompose the original problem into the model selection subproblem, which is solved by exhaustive search method, and the  joint rate and beamforming optimization subproblem, which is solved by the combination of  FP and SCA method.  
		\subsection{Model Selection}
		For fixed channel rate $R_c$ and beamforming vectors $\mathbf{w}_0$ and $\mathbf{w}_c$, the original Problem (P3.1) is simplified into a  model selection subproblem. Since each candidate DNN model is uniquely characterized by a specific source coding rate $R_s$, the model selection process is equivalent to finding the optimal $R_s$ from the discrete set $\mathcal{R}_s$, i.e., 
		\begin{align}
			\left(\text{P4.1}\right)\mathop{\text{min}}\limits_{\left\{R_s\in\mathcal{R}_s\right\}} ~~~~&~D_o \label{26}\\
			\text{s.t.}
			~~~~~~&~\frac{R_s}{R_c}\leq E_a^{\text{max}}\label{271},
		\end{align}	
		where Problem (P4.1) is solved by the well-known exhaustive search method to obtain the optimal value of $R_s$   satisfying  constraint  (\ref{271}).
				\begin{figure*}[b] 
			\hrulefill    
			\begin{align}
				U_1\left(R_c\right)&=\log_{10}Q\left(\hat{	\psi}\right)-\log_{10}\left(R_cL\right)-\hat{a}\left(\hat{	\psi}\right)\left(\psi-\hat{	\psi}\right)\log_{10}^{e}\label{40},~~~~~~~~~~~~~~~~~~~~~~~~~~~~\tag{38}
			\end{align}
			\begin{equation}
				\begin{aligned}
					~~~~~~U_{2}\left(\nu,\hat{\rho}_b\right)&=-e^{\nu^{'}-	\beta_1^{'}}\left(\nu-\nu^{'}-a_1^{mse}\left(R_s\right)\left(\hat{\rho}_b-\hat{\rho}_b^{'}+1\right)\right)
					-e^{\nu^{'}}\left(\nu-\nu^{'}+1\right)+\hat{d_{o}^{c}}\left(R_s\right)\label{41}.
				\end{aligned}\tag{39}
			\end{equation}
			\vspace{-2em}
		\end{figure*}
		\subsection{Joint Rate and Beamforming Optimization}
		For fixed $R_s$, Problem (P3.1)  is transformed into a continuous optimization problem, i.e.,
		\begin{align}
			\left(\text{P4.2}\right)\mathop{\text{min}}\limits_{\left\{R_c,\mathbf{w}_0, \mathbf{w}_c\right\}} ~~~~&~D_o \label{d0}\\
			\text{s.t.} 
			~~~~~~~~&~ R_c\geq \frac{R_{s}}{E_a^{\text{max}}} \label{d1}, \\
			~~~~&~\text{(\ref{2711})},\text{(\ref{28})}, \text{(\ref{29})} \label{31}. 
		\end{align}	
		where constraint (\ref{d1}) is obtained by rewriting  constraint (\ref{27}) with the fact of $R_c>0$. It is observed that
		Problem~(P4.2) is still non-convex due to the complicated expression in (\ref{d0}) and the coupled design variables in (22).
		Then, Problem (P4.2) is further decomposed into a coding rate subproblem and a beamforming optimization subproblem.
		\subsubsection{Coding Rate Optimization}
		For simplicity, two logarithmic auxiliary variables $\hat{\rho}_b$ and $\nu$, satisfying  $\hat{\rho}_b=\text{log}_{10}\rho_{b}$ and $e^{\nu}=\frac{\hat{d_{o}^{c}}\left(R_s\right)}{1+e^{-a_{1}^{mse}\left(R_s\right)\left(\hat{\rho}_b-a_{2}^{mse}\left(R_s\right)\right)}}$, are introduced, respectively. Then, with fixed $\mathbf{w}_0$ and $\mathbf{w}_c$,  Problem (P4.2) is reformulated as
		\begin{align}
			\left(\text{P4.3}\right)\mathop{\text{min}}\limits_{\left\{R_c,\nu,\hat{\rho}_b\right\}} ~~~~&~10^{\hat{d_{o}^{s}}\left(R_s\right)+e^{\nu}}\label{34_1}\\
			\text{s.t.} 
			~~~~~~~&~U\left(R_c\right)\leq \hat{\rho}_b\label{35_1},\\
			~~~~&~\frac{\hat{d_{o}^{c}}\left(R_s\right)}{1+e^{-\beta_1}}\leq e^{\nu}\label{36_1},\\
			~~~~&~\text{(\ref{28})}, \text{(\ref{d1})}\label{37_1},
		\end{align}
		where (\ref{34_1}) is obtained by substituting $\hat{\rho}_b$ and $e^{\nu}$  into (\ref{d0}). Moreover, constraints (\ref{35_1}) and (\ref{36_1}) are obtained by introducing auxiliary variables $\hat{\rho}_b$ and $e^{\nu}$, where   $U\left(R_c\right)=\log_{10}\left(\frac{Q\left(\mathbf{ln}_2\frac{\sqrt{L}\left(C\left(\gamma\right)-R_c\right)}{B\left(\gamma\right)}\right)}{R_c L}\right)$ and $\beta_1=a_1^{mse}\left(R_s\right)\left(\hat{\rho}_b-a_2^{mse}\left(R_s\right)\right)$. To facilitate the analysis,  constraint (\ref{36_1}) is rewritten as
		\begin{equation}
			-e^{\nu-\beta_1}-e^{\nu}+\hat{d}_o^{c}\left(R_s\right)\leq 0\label{36_11},\\
		\end{equation}
		\noindent where (\ref{36_11}) holds due to the fact that $1+e^{-\beta_1}>1$.
		It is observed that (\ref{34_1}) monotonically increases with respect to $\hat{\rho}_b$ and $e^{\nu}$.  Consequently, all constraints in Problem (P4.3) must hold with equality for the optimal solution; otherwise, (\ref{34_1}) can be further reduced by decreasing $\hat{\rho}_b$ and $e^{\nu}$, which confirms the Problem (P4.2) is equivalent to Problem (P4.3) . To address the  non-convexity of constraints    (\ref{35_1}) and (\ref{36_11}), we derive the convex upper bounds for the left-hand side (LHS) expressions of  constraints (\ref{35_1}) and (\ref{36_11})  in the following lemma. 

		\begin{Lemma}
			For  the LHS of (\ref{35_1}) and (\ref{36_11}), the convex upper bounds  derived via  the upper bound of Q-function and the first-order Taylor expansion, respectively, i.e., 
			\begin{align}
				U\left(R_c\right)&\leq U_1\left(R_c\right)\label{38},\\
				-e^{\nu-\beta_1}-e^{\nu}+\hat{d}_o^{c}\left(R_s\right)&\leq  U_{2}\left(\nu,\hat{\rho}_b\right)\label{39},
			\end{align}
			where $U_1\left(R_c\right)$ and $U_2\left(\nu,\hat{\rho}_b\right)$ are given in (\ref{40}) and (\ref{41}), respectively,
			with  $\psi=\frac{\sqrt{L}\left(C\left(\gamma\right)-R_c\right)\text{ln}2}{B\left(\gamma\right)}$, $\hat{\psi}=\frac{\sqrt{L}\left(C\left(\gamma\right)-R_c^{'}\right)\text{ln}2}{B\left(\gamma\right)}$, $e^{\nu^{'}}={\hat{d_{o}^{c}}\left(R_s\right)}/{1+e^{-a_{1}^{mse}\left(R_s\right)\left(\hat{\rho}_b^{'}-a_{2}^{mse}\left(R_s\right)\right)}}$, $\beta_1^{'}=a_1^{mse}\left(R_s\right)\left(\hat{\rho}_b^{'}-a_2^{mse}\left(R_s\right)\right)$, and $\hat{\rho}_b^{'}=Q\left(\hat{\psi}\right)\big/ \left(R_c^{'} L\right)$. Here, $R_c^{'}$, $\hat{\rho}_b^{'}$, and  $\nu^{'}$ are fixed and   obtained in the previous iteration, respectively.
		\end{Lemma}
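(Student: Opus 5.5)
The plan is to treat the two inequalities separately. Each left-hand side is written as a concave part plus, in the first case, a convex remainder. The concave part is then replaced by its tangent plane at the point from the previous iteration, $(R_c^{'},\hat{\rho}_b^{'},\nu^{'})$. A concave function lies below each of its tangents, so this gives a global upper bound that is tight at that point. The result is either affine or affine plus convex, which yields convexity.

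\emph{Bound (\ref{38}).} With $\psi$ denoting the argument of $Q$ in $U(R_c)$, we have $U(R_c)=\log_{10}Q(\psi)-\log_{10}(R_cL)$. For fixed beamformers, $\gamma$, $C(\gamma)$ and $B(\gamma)$ are constants, so $\psi$ is affine and decreasing in $R_c$. The key fact is that $Q$ is log-concave on $\mathbb{R}$. It is the tail integral of the log-concave Gaussian density, so the Prékopa--Leindler theorem applies. Equivalently, the inverse Mills ratio $a(x)\triangleq \frac{e^{-x^2/2}}{\sqrt{2\pi}\,Q(x)}$ is nondecreasing.

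Since $\frac{d}{dx}\ln Q(x)=-a(x)$, concavity of $\ln Q$ gives
\begin{equation*}
\ln Q(\psi)\le \ln Q(\hat{\psi})-a(\hat{\psi})(\psi-\hat{\psi})
\end{equation*}
for all $\psi$. Multiplying by $\log_{10}e$ and subtracting $\log_{10}(R_cL)$ yields exactly $U(R_c)\le U_1(R_c)$ with $\hat{a}=a$. Equality holds at $R_c=R_c^{'}$. Convexity of $U_1$ then follows because $-\hat{a}(\hat{\psi})(\psi-\hat{\psi})\log_{10}e$ is affine in $R_c$, and $-\log_{10}(R_cL)$ is convex for $R_c>0$.

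\emph{Bound (\ref{39}).} Let $f(\nu,\hat{\rho}_b)=-e^{\nu-\beta_1}-e^{\nu}+\hat{d}_o^{c}(R_s)$. Here $\beta_1=a_1^{mse}(R_s)(\hat{\rho}_b-a_2^{mse}(R_s))$ is affine in $\hat{\rho}_b$. Hence $\nu-\beta_1$ is affine, and $e^{\nu-\beta_1}$ and $e^{\nu}$ are convex jointly in $(\nu,\hat{\rho}_b)$, which makes $f$ jointly concave. The gradient at $(\nu^{'},\hat{\rho}_b^{'})$ is
\begin{equation*}
\partial_\nu f=-e^{\nu^{'}-\beta_1^{'}}-e^{\nu^{'}},\qquad \partial_{\hat{\rho}_b}f=a_1^{mse}(R_s)\,e^{\nu^{'}-\beta_1^{'}} .
\end{equation*}
The first-order Taylor expansion of $f$ at that point is therefore a global majorant. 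Collecting terms gives the affine function $U_2(\nu,\hat{\rho}_b)$ in (\ref{41}), up to rearranging the constant terms $-e^{\nu^{'}-\beta_1^{'}}$ and $-e^{\nu^{'}}$ into the brackets. An affine function is convex, which finishes the second part.

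\emph{Main obstacle.} The main obstacle is the first bound: it rests entirely on log-concavity of $Q$, and I would cite it as a standard result rather than reprove it. A secondary point is checking that the definitions of $\psi$ and $\hat{\psi}$ (including the $\ln 2$ factor) match the argument of $Q$ in $U(R_c)$. If they do not match exactly, the same tangent argument goes through with the corresponding affine map of $R_c$.
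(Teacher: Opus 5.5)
Your proof is correct. For (\ref{39}) it matches the paper's argument, a first-order Taylor expansion, and you add the joint-concavity reason that this expansion is a global majorant, which the paper leaves implicit.

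For (\ref{38}) you take a different route. The paper imports an exponential bound on the $Q$-function from the literature, $Q(\psi)\le \hat b(\hat\psi)e^{-\hat a(\hat\psi)\psi}+\hat c(\hat\psi)$ with $\hat a(\hat\psi)=\max\{e^{-\hat\psi^2/2}/(\sqrt{2\pi}Q(\hat\psi)),\hat\psi\}$. It then proves an auxiliary Mills-ratio inequality $e^{-x^2/2}/(\sqrt{2\pi}Q(x))>x$, by showing that $e^{-x^2/2}-\sqrt{2\pi}\,xQ(x)$ has derivative $-\sqrt{2\pi}Q(x)<0$ and tends to $0$. This makes the maximum select the inverse Mills ratio and forces $\hat c=0$, so the cited bound collapses to $Q(\psi)\le Q(\hat\psi)e^{-\hat a(\hat\psi)(\psi-\hat\psi)}$. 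You obtain this inequality in one step, as the tangent line of the concave function $\ln Q$.

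The two routes rest on the same fact. With $a(x)=e^{-x^2/2}/(\sqrt{2\pi}Q(x))$ one has $a'(x)=a(x)\,(a(x)-x)$, so the paper's auxiliary inequality $a(x)>x$ is exactly strict log-concavity of $Q$. Your version makes the global validity and the tightness at $R_c=R_c'$ transparent, and it avoids the $\max$ and $\hat c$ bookkeeping. The paper's version proves the needed property of $Q$ by elementary calculus. You could borrow that argument, via $a'=a(a-x)$, instead of citing Pr\'ekopa--Leindler. You also verify convexity of $U_1$, which the paper does not.

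One correction to your final step for (\ref{39}): collecting terms does not literally reproduce (\ref{41}) as printed. Your tangent plane is $-e^{\nu'-\beta_1'}\bigl(\nu-\nu'-a_1^{mse}(R_s)(\hat\rho_b-\hat\rho_b')+1\bigr)-e^{\nu'}(\nu-\nu'+1)+\hat d_o^{c}(R_s)$. The printed formula instead places the $+1$ inside the $a_1^{mse}$ term, so the printed $U_2$ exceeds your tangent plane by the constant $(1+a_1^{mse}(R_s))e^{\nu'-\beta_1'}$.

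This is evidently a typo in the paper. The printed inequality (\ref{39}) still holds whenever $a_1^{mse}(R_s)\ge -1$, although it is no longer tight at the expansion point. That condition is met here: the fitted sigmoid must increase in $\log_{10}\rho_b$ with $\hat d_o^{c}(R_s)>0$, which forces $a_1^{mse}(R_s)>0$. You should either add this one-line comparison or state explicitly that you are proving the bound for the corrected $U_2$.
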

		\begin{proof}
			Please see Appendix B. \hfill $\blacksquare$
		\end{proof}
		
		Replacing the LHSs of constraints (\ref{35_1}) and (\ref{36_11}) with their convex upper bounds (\ref{40}) and (\ref{41}), respectively, Problem (P4.3) is rewritten as
		\setcounter{equation}{39}
		\begin{align}
			\left(\text{P4.4}\right)\mathop{\text{min}}\limits_{\left\{R_c,\nu,\hat{\rho}_b\right\}}~~~~&~10^{\hat{d_{o}^{s}}\left(R_s\right)+e^{\nu}}\label{42}\\
			\text{s.t.} 
			~~~~~~~&~U_1\left(R_c\right)- \hat{\rho}_b\leq 0\label{43},\\
			~~~~&~U_{2}\left(\nu,\hat{\rho}_b\right)\leq 0\label{44},\\
			~~~~&~\text{(\ref{28})}, \text{(\ref{d1})}\label{45},
		\end{align} 
		which is convex and can be solved by  some optimization tools, e.g., CVX \cite{cvx_software}.
		\subsubsection{ Beamforming Optimization}
		With fixed $R_c$, the objective function $D_o$ is monotonically increasing with  $\rho_b$ according to (\ref{13_1}). Consequently, Problem (P4.2) is equivalently transformed into a BER $\rho_{b}$ minimization problem, i.e.,
		\begin{align}
			\left(\text{P4.5}\right)\mathop{\text{min}}\limits_{\left\{\mathbf{w}_0, \mathbf{w}_c\right\}} ~~~~&~\frac{Q\left(\hat{\gamma}\right)}{R_c L}\label{46}\\
			\text{s.t.} 
			~~~~~~&~\text{(\ref{2711})}, \text{(\ref{29})}\label{47},
		\end{align}	
		where (\ref{46}) is obtained according to (\ref{7}), with $\hat{\gamma}=\ln_2\frac{\sqrt{L}\left(C\left(\gamma\right)-R_c\right)}{\sqrt{1-\frac{1}{\left(1+\gamma\right)^2}}}$. It is intractable to directly solve Problem (P4.5) due to the variable coupling within the Q-function. Noting that the Q-function is monotonic, the monotonicity of BER is  determined by $\hat{\gamma}$. Hence,  it is necessary to analyze the monotonicity of $\hat{	\gamma}$ with  respect to $\gamma$ to  simplify this problem. Then, we have the following lemma:
		\begin{Lemma}\label{lemma 4.2}
			The objective function in Problem (P4.5) monotonically decreases with the increasing of  $\gamma$. 
		\end{Lemma}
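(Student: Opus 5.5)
Since the objective in Problem (P4.5) is $Q(\hat{\gamma})/(R_cL)$ with $R_c$ and $L$ fixed, and $Q(\cdot)$ is strictly decreasing, the claim reduces to showing that $\hat{\gamma}$ is strictly increasing in $\gamma$ on the feasible region where $C(\gamma)>R_c$, i.e., constraint (\ref{28}) holds strictly. I read the factor $\mathbf{ln}_2$ in (\ref{7}) as the constant $\ln 2$, consistent with the definition of $\psi$ in the preceding lemma. Then $\hat{\gamma}=\sqrt{L}\ln 2\,\bigl(C(\gamma)-R_c\bigr)/B(\gamma)$, and the positive constant $\sqrt{L}\ln2$ can be dropped. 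The plan is to study $h(\gamma)=\bigl(\log_2(1+\gamma)-R_c\bigr)/B(\gamma)$ with $B(\gamma)=\sqrt{1-(1+\gamma)^{-2}}$.

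First I would substitute $x=1+\gamma>1$, so that $B=\sqrt{x^2-1}/x$ and $h=x\bigl(\log_2 x-R_c\bigr)/\sqrt{x^2-1}$. I would then differentiate in $x$. Writing $h=f\cdot g$ with $f=\log_2 x-R_c$ and $g=x/\sqrt{x^2-1}$, we have $f'=1/(x\ln2)>0$ and $g'=-1/(x^2-1)^{3/2}<0$. This gives
\begin{equation*}
h'(x)=\frac{1}{\ln 2\,\sqrt{x^2-1}}-\frac{\log_2 x-R_c}{(x^2-1)^{3/2}}
=\frac{(x^2-1)/\ln 2-(\log_2 x-R_c)}{(x^2-1)^{3/2}}.
\end{equation*}
So the sign of $h'$ equals the sign of $x^2-1-\ln x+R_c\ln 2$.

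The key step is to show that this numerator is positive for all $x>1$. Because $R_c>0$, it suffices to show $x^2-1>\ln x$. This holds since $x^2-1-\ln x$ vanishes at $x=1$ and has derivative $2x-1/x>0$ for $x>1$; alternatively, use $\ln x\le x-1<x^2-1$. Hence $h$ is strictly increasing in $\gamma$, so $\hat{\gamma}$ is increasing, $Q(\hat{\gamma})$ is decreasing, and the objective of (P4.5) decreases as $\gamma$ grows.

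The main obstacle is the competing monotonicity: the denominator $B(\gamma)$ also increases with $\gamma$, and this could in principle dominate the increase of the capacity gap. The numerator inequality above is where this is settled, and it is the step to check most carefully. It relies on $R_c\ge0$. The bound $\ln x\le x-1$ in fact handles every $R_c\ge0$ without needing $C(\gamma)>R_c$, although that condition is still needed for the finite-blocklength approximation (\ref{7}) to be meaningful.
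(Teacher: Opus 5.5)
Your proof is correct and follows the same route as the paper: you differentiate $\hat{\gamma}$ with respect to $\gamma$, show that the numerator $\gamma^2+2\gamma+R_c\ln 2-\ln(1+\gamma)$ is positive, and then use that $Q(\cdot)$ is decreasing. The only difference is how positivity of that numerator is shown: you use $\ln x\le x-1<x^2-1$, whereas the paper argues that the numerator increases in $\gamma$ starting from its positive value $\sqrt{L}R_c\ln 2$ at $\gamma=0$.
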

		\begin{proof}
			Please see Appendix C. \hfill $\blacksquare$  
		\end{proof}
		
		Based on Lemma \ref{lemma 4.2}, Problem (P4.5) is transformed into the SINR $\gamma$ maximization problem, i.e.,
		\begin{align}
			\left(\text{P4.6}\right)\mathop{\text{max}}\limits_{\left\{\mathbf{w}_0, \mathbf{w}_c\right\}} ~~~~&~\frac{\left|\mathbf{h}_c^{H}\mathbf{w}_c\right|^2}{\left|\mathbf{h}_c^{H}\mathbf{w}_0\right|^2+\sigma_{n}^2}\label{50}\\ 
			\text{s.t.} ~~~~~~&~\text{(\ref{2711})}, \text{(\ref{29})}\label{51},
		\end{align}	
		where (\ref{50}) is still non-convex due to the coupling  design variables in the fractional structure. Therefore,   the FP method is utilized to  maximize a surrogate lower-bound function of  (\ref{50})\cite{11114787}, i.e., 
		\begin{align}
			2\Re\left(r\mathbf{h}_c^{H}\mathbf{w}_c\right)-\left|r\right|^2\left(\left|\mathbf{h}_c^{H}\mathbf{w}_0\right|^2+\sigma_{n}^2\right)\label{52},
		\end{align}
		where $  r = \mathbf{w}_c^H \mathbf{h}_c / (|\mathbf{h}_c^H \mathbf{w}_0|^2 + \sigma_n^2)$. Since   constraint  (\ref{2711}) is still non-convex, we utilize the positive semi-definite   property  of the FIM matrix $\mathbf{J}_{D}^{c_1}-\mathbf{J}_{D}^{c_2}\left(\mathbf{J}^{c_3}\right)^{-1}\left(\mathbf{J}_{D}^{\mathbf{c}_2}\right)^T$ and the property that $\operatorname{Tr}\left(\mathbf{U}^{-1}\right)$ is a decreasing function on the positive semi-definite matrix space with a given matrix $\mathbf{U}$ to transform  the non-convex constraint  (\ref{2711}) into the following form by introducing an auxiliary matrix $\mathbf{\Omega}\in\mathbb{C}^{2\times2}$, i.e.,  
		\begin{equation}\label{54}
			\operatorname{Tr}\left(\mathbf{\Omega}^{-1}\right)\leq\Pi, \mathbf{\Omega}\succeq\mathbf{0},
		\end{equation}
		\begin{equation}\label{55}
			\mathbf{J}_{D}^{c_{1}}-\mathbf{J}_{D}^{c_{2}}\left(\mathbf{J}^{c_{3}}\right)^{-1}\left(\mathbf{J}_{D}^{c_{2}}\right)^{T} \succeq \mathbf{\Omega}.
		\end{equation}  
		By utilizing the Schur complement \cite{10050406}, (\ref{55}) is rewritten as
		\begin{figure*}[b] 
			\hrulefill   
			\begin{equation}\label{621}
				\begin{aligned}
					G_j\left(\mathbf{w}_c, \mathbf{w}_0\right)= &2\left|\left(\mathbf{w}_c^{'}\right)^{H}\left(\boldsymbol{\zeta}_j\right)^{H}\mathbf{w}_c \right|^2+2\left|\left(\mathbf{w}_0^{'}\right)^{H}\left(\boldsymbol{\zeta}_j\right)^{H}\mathbf{w}_0 \right|^2\\
					&+2\Re\left\{\left(\mathbf{a}_{1,j}^{'}-\left(\mathbf{w}_0^{'}\right)^H\mathbf{\Lambda}_j\right)\mathbf{w}_0\right\}+2\Re\left\{\left(\mathbf{a}_{2,j}^{'}-\left(\mathbf{w}_c^{'}\right)^H\mathbf{\Lambda}_j\right)\mathbf{w}_c\right\},
				\end{aligned}\tag{62}
			\end{equation}
			\vspace{-3em}
		\end{figure*}
		\begin{equation}\label{56}
			\begin{bmatrix}
				\mathbf{J}_D^{\mathbf{c}_1}-\mathbf{\Omega}&\mathbf{J}_D^{\mathbf{c}_2} \\
				\left(\mathbf{J}_D^{\mathbf{c}_2}\right)^T&\mathbf{J}^{c_3}
			\end{bmatrix}\succeq \mathbf{0}.
		\end{equation}
		
		It is clear that constraint (\ref{54}) is convex, while constraint (\ref{56})  is still non-convex. Hence, an auxiliary variable vector $\boldsymbol{\mu}=\left[\mu_1,\mu_2,\mu_3,\mu_4,\mu_5,\mu_6\right]^{T}\in\mathbb{C}^{6\times1}$ is introduced to extract $\mathbf{w}_0$ and $\mathbf{w}_c$ from (\ref{56}), and   Problem (P4.6) is reformulated as
		\begin{align}
			\left(\text{P4.7}\right)\mathop{\text{max}}\limits_{\left\{r,\boldsymbol{\Omega}, \mathbf{w}_0, \mathbf{w}_c,\boldsymbol{\mu}\right\}} ~&~ \text{(\ref{52})}~~~\label{57}\\
			\text{s.t.} ~~~~~~~&\begin{bmatrix}
				\frac{2}{\sigma_{s}^{2}}\Re\left(\mathbf{a}_1\right)-\mathbf{\Omega}&\frac{2}{\sigma_{s}^{2}}\Re\left(\mathbf{a}_2\right)\\
				\frac{2}{\sigma_{s}^{2}}\Re\left(\mathbf{a}_2^{T}\right)&\frac{2}{\sigma_{s}^{2}}\Re\left(\mathbf{a}_3\right)+\frac{1}{\sigma_{T}^2}
			\end{bmatrix}\succeq\mathbf{0}\label{58},\\
			~~~~&~\mu_{j}=f_j\left(\mathbf{w}_c, \mathbf{w}_0\right),\label{59}\\
			~~~~&~\text{(\ref{29})}, \text{(\ref{54})}\label{60},
		\end{align}
		where $\mathbf{a}_1=\begin{bmatrix}
			\mu_1&\mu_2\\
			\mu_2&\mu_3
		\end{bmatrix}$, $\mathbf{a}_2=\begin{bmatrix}
			\mu_4\\
			\mu_5
		\end{bmatrix}$, and $\mathbf{a}_3=\mu_6$. According to Proposition \ref{3.1},  $f_j\left(\mathbf{w}_c, \mathbf{w}_0\right)$ in (\ref{59}) is expressed as $f_j\left(\mathbf{w}_c, \mathbf{w}_0\right)=\operatorname{Tr}\left(\boldsymbol{\zeta}_{j}\boldsymbol{\Sigma}\right)$, $\forall j\in\boldsymbol{\mathcal{J}}=\left\{1,2,...,6\right\}$. Let $\mathbf{S}\left(r, \mathbf{w}_0, \mathbf{w}_c\right)=2\Re\left(r\mathbf{h}_c^{H}\mathbf{w}_c\right)-\left|r\right|^2\left(\left|\mathbf{h}_c^{H}\mathbf{w}_0\right|^2+\sigma_{n}^2\right)$.
		Then, by adding non-convex constraint (\ref{59}) as a penalty term to the 
		\vspace{-0.2cm}
		objective function (\ref{57}),  Problem (P4.7) is reformulated as
		\vspace{0.1cm}
		\begin{align}
			\left(\text{P4.8}\right)\mathop{\text{min}}\limits_{\left\{r,\boldsymbol{\Omega}, \mathbf{w}_c,\mathbf{w}_0, \boldsymbol{\mu}\right\}} &~\frac{1}{2\kappa}F\left(\mathbf{w}_c, \mathbf{w}_0, \boldsymbol{\mu}\right)-\mathbf{S}\left(r, \mathbf{w}_0, \mathbf{w}_c\right)\label{61}\\
			\text{s.t.} 
			~~~~~&~\text{(\ref{29})}, \text{(\ref{54})}, \text{(\ref{58})}\label{62},
		\end{align}   
		where $F_1\left(r, \mathbf{w}_c, \mathbf{w}_0, \boldsymbol{\mu}\right)=\frac{1}{2\kappa}F\left(\mathbf{w}_c, \mathbf{w}_0, \boldsymbol{\mu}\right)-\mathbf{S}\left(r, \mathbf{w}_0, \mathbf{w}_c\right)$,  $\kappa\textgreater0$ denotes the penalty parameter, and $F\left(\mathbf{w}_c,\mathbf{w}_0, \boldsymbol{\mu}\right)=\sum_{j=1}^{6}\left|f_j\left(\mathbf{w}_c,\mathbf{w}_0\right)-\mu_j\right|^2$. To address Problem (P4.8), we decompose it as two subproblems. Specifically, with fixed  beamforming vectors $\mathbf{w}_c$ and $\mathbf{w}_0$,  the subproblem to acquire $\boldsymbol{\Omega}$ and $\boldsymbol{\mu}$ is written as
		\begin{align}
			\left(\text{P4.9}\right)\mathop{\text{min}}\limits_{\left\{\boldsymbol{\Omega},  \boldsymbol{\mu}\right\}}~~~~&~\frac{1}{2\kappa}F\left(\mathbf{w}_c,\mathbf{w}_0,\boldsymbol{\mu}\right)\label{63}\\
			\text{s.t.} 
			~~~~~&~\text{(\ref{54})}, \text{(\ref{58})}\label{64},
		\end{align} 
		which is a semi-definite programming problem (SDP) and can be addressed by CVX \cite{cvx_software}. Then, with fixed $\boldsymbol{\Omega}$ and $\boldsymbol{\mu}$, the subproblem updating $\mathbf{w}_0$ and $\mathbf{w}_c$ is reconstructed as
		\begin{align}
			\left(\text{P4.10}\right)\mathop{\text{min}}\limits_{\left\{r, \mathbf{w}_0,  \mathbf{w}_c\right\}}&~F_1\left(r, \mathbf{w}_c, \mathbf{w}_0, \boldsymbol{\mu}\right)\label{631}\\
			\text{s.t.} 
			~~~&~\text{(\ref{29})}\label{64},
		\end{align} 
		where constraint (\ref{29}) and the  part $-\mathbf{S}\left(r, \mathbf{w}_0, \mathbf{w}_c\right)$ in (\ref{631})  are convex, while the residual part  $F\left(\mathbf{w}_c,\mathbf{w}_0,\boldsymbol{\mu}\right)$ is  non-convex  with respect to $\mathbf{w}_c$ and $\mathbf{w}_0$. To address this problem, we have the following lemma.  
		\begin{algorithm}[!t]
			\caption{AO-Based Algorithm for Problem (P3.1) }
			\label{Algorithm 1}
			\begin{algorithmic}[1]
				\State	 {\textbf{Input:} $P_T$,  $R_s\in\mathcal{R}_s=\left\{R_{s,1},...,R_{s,G}\right\}$, $\mathbf{h}_c$ and $\mathbf{H}_0$,  $\iota$}
				\State	 {\textbf{Output:} $R_{s}^{\star}$,   $R_{c}^{\star}$, $\mathbf{w}_{c}^{\star}$, $\mathbf{w}_{0}^{\star}$; }
				\State	 {\textbf{Initialize:}   $\mathbf{w}_c^{'}$, $\mathbf{w}_0^{'}$, $R_c^{'}$,   $\kappa^{\left(0\right)}$,    $i=1$, $q=0$;    }
				\For {$g=1:G$}
				\State \textbf{while} no convergence  \textbf{do}
				\State \quad \textbf{while} no convergence   \textbf{do}
				\State{\quad\quad Obtain $R_{c,g}^{\left(i\right)}$, $\hat{\rho}_{b,g}^{\left(i\right)}$, $\nu_{g}^{\left(i\right)}$ by solving Problem (P4.4);}
				\State{\quad\quad Obtain $\boldsymbol{\Omega}_{g}^{\left(i\right)}$, $\boldsymbol{\mu}_{g}^{\left(i\right)}$ by solving Problem (P4.9) };
				\State {\quad\quad Obtain $r_{g}^{\left(i\right)}$,  $\mathbf{w}_{c,g}^{\left(i\right)}$ and $\mathbf{w}_{0,g}^{\left(i\right)}$ by solving Problem (P4.11); }
				\State{\quad\quad Update $b_1^{\left(i\right)}=F_1^{\left(i\right)}\left(r, \mathbf{w}_c,\mathbf{w}_0, \boldsymbol{\mu} \right)$};
				\State{\quad\quad Update $\mathbf{b}_2^{\left(i\right)}=\left[b_{2,1}^{\left(i\right)}, b_{2,2}^{\left(i\right)}, \cdots, b_{2,6}^{\left(i\right)}\right]$ with $b_{2,j}^{\left(i\right)}=\left|f_j^{\left(i\right)}\left(\mathbf{w}_{c,g},\mathbf{w}_{0,g}\right)-\mu_{g,j}^{\left(i\right)}\right|^2$, for $j\in\boldsymbol{\mathcal{J}}$ };
				\State {\quad\quad Update $D_{o,g}^{\left(i\right)}$,  $\mathbf{w}_c^{'}=\mathbf{w}_{c,g}^{\left(i\right)}$, $\mathbf{w}_0^{'}=\mathbf{w}_{0,g}^{\left(i\right)}$, $R_c^{'}=R_{c,g}^{\left(i\right)}$;  }
				\State {\quad\quad Update $i=i+1$;}
				\State \quad\textbf{end while}
				\State {\quad Update $q=q+1$};
				\State {\quad Update $\kappa^{\left(q\right)}=\iota\kappa^{\left(q-1\right)}$};
				\State  \textbf{end while}  
				\EndFor
				\State	{$g^{\star}=\mathop{\text{arg}\text{min}}\limits_{g}  \left\{D_{o,1}, \cdots, D_{o,G} \right\}$, $R_s^{\star}=R_{s,g^{\star}}$, $R_c^{\star}=R_{c,g^{\star}}$, $\mathbf{w}_c^{\star}=\mathbf{w}_{c,g^{\star}}$, $\mathbf{w}_o^{\star}=\mathbf{w}_{0,g^{\star}}$.}
			\end{algorithmic}
		\end{algorithm}
		\begin{Lemma}
			By expanding the functional expression and applying the SCA method, the non-convex part $\left|f_j\left(\mathbf{w}_c,\mathbf{w}_0\right)-\mu_j\right|^2$ in $F\left(\mathbf{w}_c,\mathbf{w}_0,\boldsymbol{\mu}\right)$ is transformed into a  convex version given in (\ref{621}),
			where $G_j\left(\mathbf{w}_c, \mathbf{w}_0\right)$ denotes the  convex surrogate obtained via SCA, with  $\mathbf{\Lambda}_j=\mu_{j}^{*}\boldsymbol{\zeta}_j+\mu_{j}\left(\boldsymbol{\zeta}_j\right)^{H}$ , $\mathbf{a}_{1,j}^{'}=\left(\left(h^{'}\right)^{*}\left(\boldsymbol{\zeta}_j\right)\mathbf{w}_0^{'}+h^{'}\left(\boldsymbol{\zeta}_j\right)^H\mathbf{w}_0^{'}\right)^H$,  $h^{'}=\left(\mathbf{w}_c^{'}\right)^H\boldsymbol{\zeta}_{j}\mathbf{w}_c^{'}$, $\mathbf{a}_{2,j}^{'}=\left(\left(g^{'}\right)^{*}\left(\boldsymbol{\zeta}_j\right)\mathbf{w}_c^{'}+g^{'}\left(\boldsymbol{\zeta}_j\right)^H\mathbf{w}_c^{'}\right)^H$,  and $g^{'}=\left(\mathbf{w}_0^{'}\right)^H\boldsymbol{\zeta}_{j}\mathbf{w}_0^{'}$. Besides, $\mathbf{w}_c^{'}$ and $\mathbf{w}_0^{'}$ denote the values obtained in the previous iteration, respectively;
		\end{Lemma}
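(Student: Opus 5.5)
The plan is to write the penalty term explicitly in terms of the beamformers, keep the parts that are already convex, and replace each remaining non-convex part by a convex (affine or convex-quadratic) surrogate built at the previous iterate $(\mathbf{w}_c^{'},\mathbf{w}_0^{'})$. By Proposition \ref{3.1} and $\boldsymbol{\Sigma}=\mathbf{w}_c\mathbf{w}_c^H+\mathbf{w}_0\mathbf{w}_0^H$, I will first write
$f_j(\mathbf{w}_c,\mathbf{w}_0)=h+g$, where $h=\mathbf{w}_c^H\boldsymbol{\zeta}_j\mathbf{w}_c$ and $g=\mathbf{w}_0^H\boldsymbol{\zeta}_j\mathbf{w}_0$. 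Note that $h$ and $g$ are in general complex, because $\boldsymbol{\zeta}_j$ is not Hermitian for $j=2,4,5$. Expanding the square then gives
\begin{equation*}
\left|f_j-\mu_j\right|^2=|h|^2+|g|^2+2\Re\left(h^{*}g\right)-\mathbf{w}_c^H\mathbf{\Lambda}_j\mathbf{w}_c-\mathbf{w}_0^H\mathbf{\Lambda}_j\mathbf{w}_0+|\mu_j|^2 ,
\end{equation*}
where I use $2\Re(\mu_j^{*}\mathbf{w}^H\boldsymbol{\zeta}_j\mathbf{w})=\mathbf{w}^H\mathbf{\Lambda}_j\mathbf{w}$ with the Hermitian matrix $\mathbf{\Lambda}_j=\mu_j^{*}\boldsymbol{\zeta}_j+\mu_j\boldsymbol{\zeta}_j^H$. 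The constant $|\mu_j|^2$ does not depend on the beamformers and will be dropped, since $\boldsymbol{\mu}$ is fixed in Problem (P4.10).

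Next, I will treat the three groups of terms separately. \emph{Quartic self-terms.} For $|h|^2=|\mathbf{w}_c^H\boldsymbol{\zeta}_j\mathbf{w}_c|^2$, I will freeze one factor of $\mathbf{w}_c$ at $\mathbf{w}_c^{'}$, which gives the convex quadratic $|(\mathbf{w}_c^{'})^H\boldsymbol{\zeta}_j^H\mathbf{w}_c|^2$. The factor $2$ comes from matching the gradient of $|h|^2$ at $\mathbf{w}_c^{'}$: both slots of the quadratic form contribute equally, by the chain rule. The term $|g|^2$ is handled the same way. \emph{Cross term.} $2\Re(h^{*}g)$ is a bilinear coupling of two quadratic forms. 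I will linearize it by a first-order Taylor expansion in $\mathbf{w}_0$ with $h$ held at $h^{'}$, and symmetrically in $\mathbf{w}_c$ with $g$ held at $g^{'}$. Differentiating $2\Re\left((h^{'})^{*}\mathbf{w}_0^H\boldsymbol{\zeta}_j\mathbf{w}_0\right)$ at $\mathbf{w}_0^{'}$ gives the linear terms $2\Re(\mathbf{a}_{1,j}^{'}\mathbf{w}_0)$ and $2\Re(\mathbf{a}_{2,j}^{'}\mathbf{w}_c)$, with $\mathbf{a}_{1,j}^{'}$ and $\mathbf{a}_{2,j}^{'}$ exactly as stated in the lemma. \emph{Indefinite quadratic terms.} $-\mathbf{w}^H\mathbf{\Lambda}_j\mathbf{w}$ is replaced by its first-order expansion $-2\Re((\mathbf{w}^{'})^H\mathbf{\Lambda}_j\mathbf{w})$, up to a constant. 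Collecting these pieces reproduces $G_j$ in (\ref{621}).

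Finally, I will check the two properties the SCA framework needs. First, $G_j$ is convex: it is a sum of squared moduli of affine functions and real parts of linear functions. Second, $G_j$ agrees with $|f_j-\mu_j|^2$ in value (after adding back the dropped constants) and in gradient at $(\mathbf{w}_c^{'},\mathbf{w}_0^{'})$; I will verify this by differentiating both sides with Wirtinger calculus. These two properties together make the successive minimization of $F_1$ monotone in the usual SCA sense.

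The main obstacle is the direction of each bound. Tangent linearization of $-\mathbf{w}^H\mathbf{\Lambda}_j\mathbf{w}$ is a global upper bound only when $\mathbf{\Lambda}_j\succeq\mathbf{0}$. Likewise, the partially frozen quartic and the linearized bilinear term are not automatically upper bounds. My first attempt would be to show majorization using $\mathbf{\Lambda}_j\preceq\lambda_{\max}(\mathbf{\Lambda}_j)\mathbf{I}$ together with the identity $2\Re(x^{*}y)=|x+y|^2-|x|^2-|y|^2$ applied to the bilinear term. If a global bound cannot be obtained this way, I would fall back to the weaker claim that $G_j$ is a convex surrogate with first-order consistency at the previous point, which is the sense in which the lemma calls it an SCA surrogate.
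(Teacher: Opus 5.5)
Your derivation follows the paper's Appendix D. You use the same split of $|f_j-\mu_j|^2$ into $|h|^2$, $|g|^2$, $2\Re(h^{*}g)$ and the $\mathbf{\Lambda}_j$-quadratics. Your joint linearizations of the cross term and of $-\mathbf{w}^H\mathbf{\Lambda}_j\mathbf{w}$ match the paper's, and your $\mathbf{a}_{1,j}^{'},\mathbf{a}_{2,j}^{'}$ are correct, so you obtain the same $G_j$. Your expansion even has the correct $+$ sign on the cross term, which (\ref{8611}) misprints.

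The only difference is the quartic term. The paper linearizes it in the lifted variable $\hat{\mathbf{w}}_c=\mathrm{vec}(\mathbf{w}_c\mathbf{w}_c^H)$; you freeze one factor and justify the weight $2$ by ``both slots contribute equally.'' That justification, and the Wirtinger check you plan, fail for exactly the indices you flagged. For non-Hermitian $\boldsymbol{\zeta}_j$ ($j=2,4,5$), at $\mathbf{w}_c^{'}$ you have $\nabla_{\mathbf{w}_c^{*}}|h|^2=(h^{'})^{*}\boldsymbol{\zeta}_j\mathbf{w}_c^{'}+h^{'}\boldsymbol{\zeta}_j^{H}\mathbf{w}_c^{'}$, but $\nabla_{\mathbf{w}_c^{*}}\,2|(\mathbf{w}_c^{'})^{H}\boldsymbol{\zeta}_j^{H}\mathbf{w}_c|^2=2(h^{'})^{*}\boldsymbol{\zeta}_j\mathbf{w}_c^{'}$. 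For example, $\boldsymbol{\zeta}_j=\left[\begin{smallmatrix}0&1\\0&0\end{smallmatrix}\right]$ and $\mathbf{w}_c^{'}=[1,1]^T$ give $[1,1]^T$ versus $[2,0]^T$; the same happens for $|g|^2$. So (\ref{621}) is first-order consistent only for the Hermitian $\boldsymbol{\zeta}_1,\boldsymbol{\zeta}_3,\boldsymbol{\zeta}_6$. For general $\boldsymbol{\zeta}_j$ there are two fixes:
\begin{itemize}
\item Use the consistent convex surrogate $|(\mathbf{w}_c^{'})^{H}\boldsymbol{\zeta}_j\mathbf{w}_c|^2+|(\mathbf{w}_c^{'})^{H}\boldsymbol{\zeta}_j^{H}\mathbf{w}_c|^2$. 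This is what (\ref{85}) actually gives if $\mathbf{G}_j=\boldsymbol{\zeta}_j^{*}\otimes\boldsymbol{\zeta}_j$ is not tacitly treated as Hermitian.
\item Replace $\boldsymbol{\zeta}_j$ by $(\boldsymbol{\zeta}_j+\boldsymbol{\zeta}_j^{H})/2$. This is harmless, since only $\Re\{\operatorname{Tr}(\boldsymbol{\zeta}_j\boldsymbol{\Sigma})\}$ enters the FIM.
\end{itemize}

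On the bound direction your suspicion is right, but your first attempt cannot succeed. For $\boldsymbol{\zeta}_j\succeq\mathbf{0}$ ($j=1,3,6$), Cauchy--Schwarz gives $|(\mathbf{w}_c^{'})^{H}\boldsymbol{\zeta}_j\mathbf{w}_c|^2\le h^{'}h$. Hence $2|(\mathbf{w}_c^{'})^{H}\boldsymbol{\zeta}_j\mathbf{w}_c|^2-(h^{'})^2\le h^2-(h-h^{'})^2\le |h|^2$, so the quartic surrogate, with its constant restored, is a global \emph{minorant}. No majorization argument exists for that term. Consequently, your claim that convexity plus first-order consistency makes the minimization of $F_1$ monotone is unsupported, since descent needs majorization. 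Neither the lemma nor the paper's proof claims more than a convex surrogate built from first-order expansions. Your fallback is therefore precisely what is asserted: state it that way, and drop the monotonicity sentence.
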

		\vspace{0.1cm}
		\begin{proof}
			Please see Appendix D. \hfill $\blacksquare$
		\end{proof}
		\vspace{0.1cm}
		
		Based on Lemma 4.3, Problem (P4.10) is transformed into the following convex problem, i.e.,
		\setcounter{equation}{62}
		\begin{align}\label{68}
			\left(\text{P4.11}\right)\mathop{\text{min}}\limits_{\left\{r,\mathbf{w}_0,\mathbf{w}_c  \right\}}~~~~&~F_2\left(r, \mathbf{w}_c,\mathbf{w}_0\right),\\
			\text{s.t.} 
			~~~~~~~&~\text{(\ref{29})},
		\end{align}	
		where  
		$F_2\left(r, \mathbf{w}_c,\mathbf{w}_0\right)=\sum_{j=1}^{6}\frac{1}{2\kappa}G_j\left(\mathbf{w}_c, \mathbf{w}_0\right)-\mathbf{S}\left(r, \mathbf{w}_0, \mathbf{w}_c\right)$.
		Obviously, Problem (P4.11) is a convex problem, which is solved by existing convex techniques \cite{cvx_software}. 
		The overall algorithm is summarized in Algorithm \uppercase\expandafter{\romannumeral1}, where  $\iota$ denotes the step size to control the decreasing speed of $\kappa$ and  the subscript $g$ is adopted to denote the optimization parameters of the $g$-th model.  Moreover, the convergence criterion for the inner layer and outer layer are defined as $\left|b_1^{\left(i\right)}-b_1^{\left(i-1\right)}\right|\leq 10^{-3}$ and $\text{max}\left\{b_{2,j}^{\left(i\right)}-b_{2,j}^{\left(i-1\right)}, j\in\boldsymbol{\mathcal{J}} \right\}\leq 10^{-6}$, respectively. 
		\section{Simulation Results}
		In this section, we provide simulation results to evaluate the performance of the proposed SA-ASCC scheme for the considered bi-static ISSC system. The  experimental setup is given as follows:
		\begin{itemize}
			\item \textbf{Datasets}: Experimental validation of the SA-ASCC scheme is conducted on the CUB-200-2011 dataset, which consists of 11,788 images covering 200 bird categories. We split the dataset into 5,994 training samples and 5,794 testing samples. All input images are resized to dimensions of 256$\times$256 for consistency.
			\item \textbf{DNN model architecture}: We leverage the hyperprior-based DNN framework to serve as the backbone for both the feature extraction function $\boldsymbol{f}_{\varphi}$ and the feature recovery function $\boldsymbol{g}_{\nu}$. We jointly train $G$  hyperprior-based DNN models, with source coding rate ranging from $1.3\times 10^3$ to $5.7\times10^4$.
			\item \textbf{Parameter settings}: We consider the ISSC system with one TR equipped with $N_T=8$ antennas, one target, one single-antenna CRE and one SRE with $N_R=4$ antennas. The block length is  set as $L=256$. The positions of the TR, target, SRE and CRE are set as $\mathbf{p}_b=\left[0,0\right], \mathbf{p}_o=\left[100,50\right], \mathbf{p}_r=\left[0,50\right], \mathbf{p}_c=\left[20,30\right]$, respectively. The bandwidth of the considered ISSC system is set as $B=100$MHz. The path loss exponents for the TR-CRE and TR-target links are set as $\epsilon_1 =3$ and $\epsilon_2=2$, respectively\cite{11240213}. The reflection factor is set as $\varrho=0.6$. The  power of CSCG noise is set as $\sigma_n^2=\sigma_s^2=-50 $ dBm. The TS error covariance is set as $\sigma_{T}=100$ ns.  The number of samples is set as $N=1024$. The HCRB threshold is set as $\Pi=0.01$. The number of DNN models is $G=20$. The step size is set as $\iota=0.8$. 
			\item\textbf{Baseline schemes:} To evaluate the proposed SA-ASCC scheme for ISSC systems, we consider two representative benchmarks: DJSCC\cite{8723589} based on the continuous JSCC architecture and BPG based on the conventional  SSCC architecture. Specifically, for the DJSCC baseline, the input images are directly encoded into analog channel symbols, and the model is trained under the same SNR and ABR settings as those used in the corresponding test scenarios. For both benchmarks, water-filling (WF) is employed for power allocation according to the channel conditions, while zero-forcing (ZF) is adopted for beamforming to suppress inter-user interference. For brevity, the comparison benchmark schemes are denoted  as DJSCC-WF-ZF and BPG-WF-ZF, respectively.
		\end{itemize}
		\subsection{Image Reconstruction Performance}
		\begin{figure}[!t]
			\setlength{\belowcaptionskip}{-5pt}
			\setlength{\abovecaptionskip}{0.2cm}
			\centering
			\vspace{-1em}
			\includegraphics[width=2.3in]{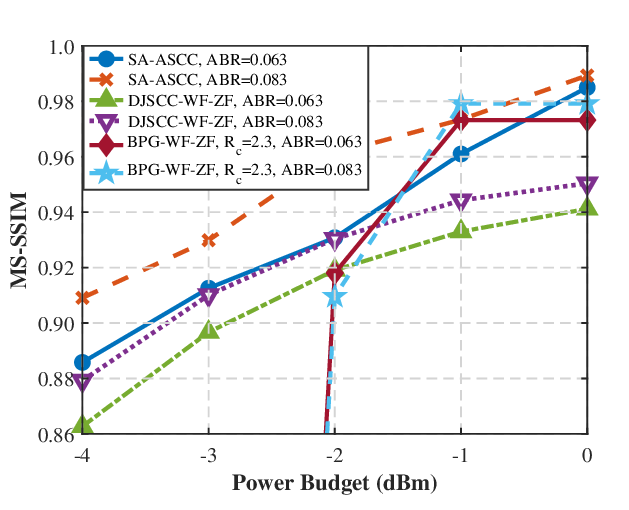}
			\caption{MS-SSIM comparison under different $P_T$.}
			\label{fig:psnr}
		\end{figure}
		\begin{figure}[!t]
			\setlength{\belowcaptionskip}{-10pt}
			\setlength{\abovecaptionskip}{0.2cm}
			\centering
			\vspace{-1em}
			\includegraphics[width=2.3in]{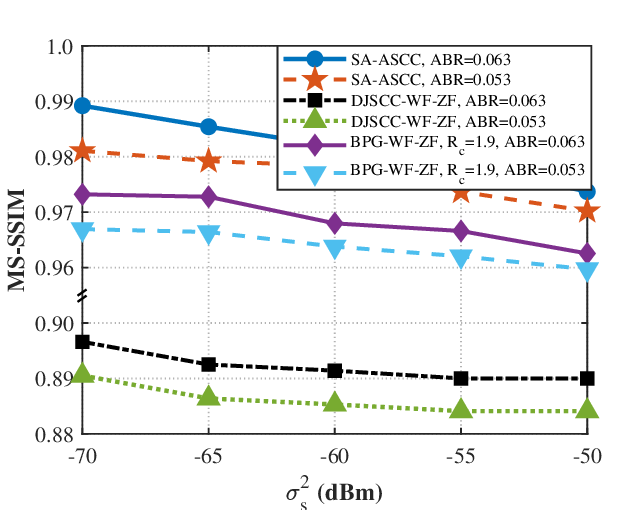}
			\caption{MS-SSIM comparison under different $\sigma_s^2$.}
			\label{fig:crb}
		\end{figure}
		
		In this section, multi-scale structural similarity (MS-SSIM) is selected as the performance metric of SemCom\cite{yuan2025adaptivesourcechannelcodingmultiuser}. As shown in Fig.~\ref{fig:psnr}, the proposed SA-ASCC scheme exhibits superior overall MS-SSIM performance compared with the DJSCC-WF-ZF and BPG-WF-ZF benchmarks across different $P_T$ and ABR settings, with $\Pi=0.06$ . This performance gain is primarily attributed to its adaptive source-channel coding rate mechanism, which dynamically adjusts the source-channel coding rates according to the varying channel conditions, i.e., SINR, and consequently enables SA-ASCC to effectively mitigate the cliff effect suffered by the conventional BPG-WF-ZF scheme. Moreover, when $P_T=-4~\mathrm{dBm}$, corresponding to an SINR of $-3.2~\mathrm{dB}$, the proposed SA-ASCC scheme with $\mathrm{ABR}=0.063$ achieves a 2.68$\%$ MS-SSIM improvement over DJSCC-WF-ZF with the same ABR, demonstrating the superiority of the proposed SA-ASCC  in low-SINR regimes. When $P_T=-2~\mathrm{dBm}$, SA-ASCC with $\mathrm{ABR}=0.063$ achieves comparable MS-SSIM performance to DJSCC-WF-ZF with $\mathrm{ABR}=0.083$, while reducing the required bandwidth by 24.1$\%$, further demonstrating the bandwidth efficiency of the proposed SA-ASCC scheme.

Fig.~\ref{fig:crb} illustrates the impact of sensing noise under $P_T=-3\mathrm{dBm}$ and $\Pi=0.065$. The simulation results show that the MS-SSIM performance degrades as the sensing noise increases. This is because, to satisfy the same sensing accuracy constraint under a higher sensing noise level, the proposed SA-ASCC scheme allocates more transmission resources to the sensing task, thereby reducing the transmission resources available for semantic transmission and consequently degrading the MS-SSIM performance.

		\subsection{Target Positioning Performance}
		\begin{figure}[!t]
			\setlength{\belowcaptionskip}{-5pt}
			\setlength{\abovecaptionskip}{0.2cm}
			\centering
			\vspace{-0.5em}
			\includegraphics[width=2.3in]{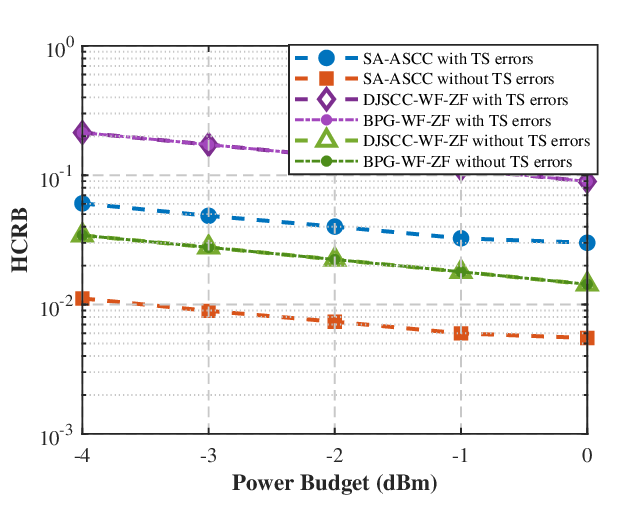}
			\caption{HCRB comparison under different $P_T$.}
			\label{fig:5}
		\end{figure}
		
		As shown in Fig.~\ref{fig:5}, the proposed SA-ASCC scheme achieves a consistently lower HCRB than the DJSCC-WF-ZF and BPG-WF-ZF benchmarks across different $P_T$. This sensing performance gain is mainly attributed to the superior SemCom performance of SA-ASCC in the low-SINR regime, which enables more transmission resources to be allocated to sensing while maintaining the required communication performance. In contrast, the DJSCC-WF-ZF and BPG-WF-ZF schemes are constrained by the limited degrees of freedom (DoFs) caused by the stringent interference-suppression requirements of ZF beamforming. Moreover, the DJSCC-WF-ZF scheme and the BPG-WF-ZF scheme exhibit identical HCRB performance since they employ the same power allocation and beamforming strategies under the same system settings. Furthermore, the pronounced HCRB degradation in the presence of TS errors highlights the importance of accounting for imperfect TS in practical system design.
		\begin{figure}[!t]
			\setlength{\belowcaptionskip}{-5pt}
			\setlength{\abovecaptionskip}{0.2cm}
			\centering
			\vspace{-1em}
			\includegraphics[width=2.3in]{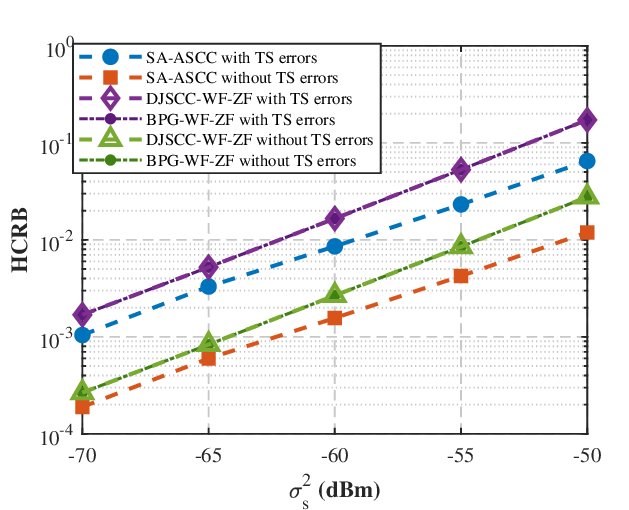}
			\caption{HCRB comparison under different $\sigma_s^2$.}
			\label{fig:6}
		\end{figure}

Fig.~\ref{fig:6} shows the HCRB performance under different sensing noise levels. It is observed that the HCRB monotonically increases with the sensing noise level, since stronger sensing noise reduces the quality of the received sensing signals and consequently degrades the sensing estimation accuracy. Furthermore, a comprehensive analysis of Fig.~\ref{fig:5} and Fig.~\ref{fig:6} reveals that the proposed SA-ASCC scheme consistently achieves a lower HCRB than the DJSCC-WF-ZF and BPG-WF-ZF benchmarks, regardless of whether the system operates under perfect TS or suffers from TS errors.
		\subsection{Tradeoff  and Power Allocation Analysis}

Fig.~\ref{fig:8} shows the achievable  performance region under different TS error levels, with $P_T=0$ dBm. It is observed that the MS-SSIM increases with the HCRB and gradually approaches saturation, since higher HCRB  allows more transmit power to be allocated to SemCom. Moreover, for a given HCRB, the SA-ASCC scheme with $\sigma_T=1$ ns achieves a higher MS-SSIM than SA-ASCC scheme with $\sigma_T=100$ ns. This performance gap arises because a larger TS error requires more transmit power to be allocated to sensing to achieve the same sensing accuracy, thereby reducing the power available for semantic transmission and degrading the MS-SSIM performance. Consequently, the achievable  performance region decreases as the TS error increases. Furthermore, compared with the DJSCC-WF-ZF and BPG-WF-ZF schemes, the proposed SA-ASCC scheme achieves a larger achievable  performance region, demonstrating its superiority in jointly supporting SemCom and sensing.		

		\begin{figure}[!t]
			\setlength{\belowcaptionskip}{-5pt}
			\setlength{\abovecaptionskip}{0.2cm}
			\centering
			\vspace{-1em}
			\includegraphics[width=2.3in]{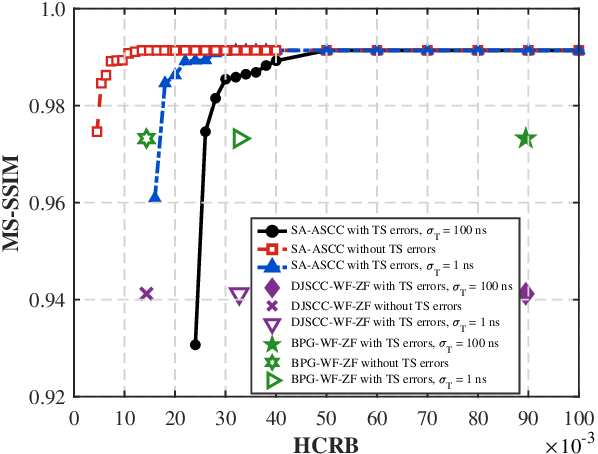}
			\caption{MS-SSIM comparison versus HCRB.}
			\label{fig:8}
		\end{figure} 
				\begin{figure}[!t]
	\setlength{\belowcaptionskip}{-5pt}
	\setlength{\abovecaptionskip}{0.2cm}
	\centering
	\vspace{-1em}
	\includegraphics[width=2.3in]{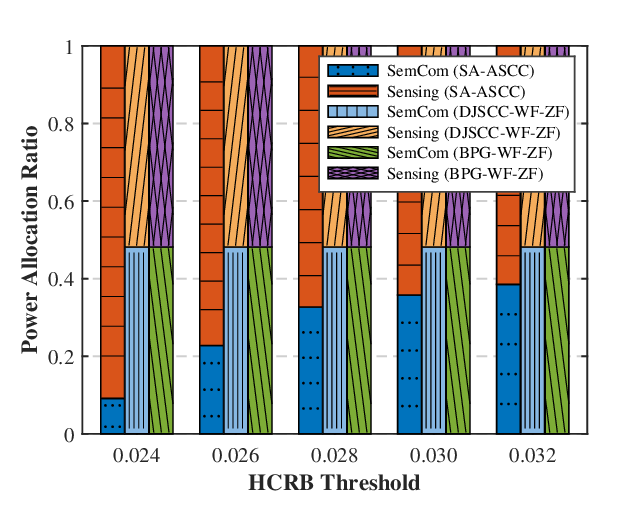}
	\caption{power allocation versus sensing accuracy requirement.}
	\label{p2}
\end{figure}

Fig.~\ref{p2} illustrates the power allocation between SemCom and sensing under different HCRB thresholds, with $P_T=0$ dBm and $\sigma_{T}=100\text{ns}$. As the HCRB threshold increases, i.e., as the sensing accuracy requirement becomes less stringent, the proposed SA-ASCC scheme progressively allocates more power resources to SemCom. This trend arises because relaxing the HCRB constraint reduces the transmission resources required to satisfy the sensing accuracy requirement, thereby allowing more power to be allocated to semantic transmission. Notably, the SA-ASCC scheme allocates more power resources to sensing than the DJSCC-WF-ZF and BPG-WF-ZF schemes, which accounts for its superior sensing performance observed in the preceding results. Moreover, the power allocation of the DJSCC-WF-ZF and BPG-WF-ZF schemes remains unchanged with the HCRB threshold, as their WF-based power allocation is determined by the channel conditions and cannot adapt to varying sensing accuracy requirements.
		\section{Conclusion}
		This paper proposed an integrated transmission framework composed of  SA-ASCC and beamforming design for bi-static ISSC system. We first derived  the E2E distortion as a function of the BER for any fixed source coding rate, and then  derived the HCRB for target position under imperfect TS. To characterize the achievable region between SemCom and sensing performance,  an E2E distortion minimization problem was  formulated   by considering the HCRB threshold, channel uses,  and power budget. To address this problem,  an  AO algorithm is proposed to decompose it into two subproblems, which are solved by the exhaustive search method and a combination of FP and SCA, respectively. Finally, simulation results demonstrated that the proposed  SA-ASCC scheme achieved superior performance   compared with the DJSCC‑WF‑ZF and BPG-WF-ZF benchmarks.
		\appendices
		\section{Derivation of HCRB with TS errors}
		The $\left(i,j\right)$-th element of $\mathbf{J}_D$ is derived as
		\begin{align}\label{69}
			J_{D}^{i,j}&=-\mathbb{E}\left(\frac{\partial^{2}\text{ln}\hat{\boldsymbol{f}}\left(\breve{\mathbf{y}}_{s}^{'}|\boldsymbol{\chi}\right)}{\partial\boldsymbol{\chi}^{\left(i\right)}\partial\boldsymbol{\chi}^{\left(j\right)}}\right)\\
			&=2\Re\left\{\frac{\partial\widehat{\mathbf{x}}^{H}}{\partial\boldsymbol{\chi}^{\left(i\right)}}\boldsymbol{\Xi}^{-1}\frac{\partial\widehat{\mathbf{x}}}{\partial\boldsymbol{\chi}^{\left(j\right)}}\right\}\label{651}\\
			&~~~+\operatorname{Tr}\left(\boldsymbol{\Xi}^{-1}\frac{\partial\boldsymbol{\Xi}^{H}}{\partial\boldsymbol{\chi}^{\left(i\right)}}\boldsymbol{\Xi}^{-1}\frac{\partial\boldsymbol{\Xi}}{\partial\boldsymbol{\chi}^{\left(j\right)}}\right)\nonumber\\
			&=\frac{2}{\sigma_{s}^2}\Re\left\{\frac{\partial\widehat{\mathbf{x}}^{H}}{\partial\boldsymbol{\chi}^{\left(i\right)}}\frac{\partial\widehat{\mathbf{x}}}{\partial\boldsymbol{\chi}^{\left(j\right)}}\right\}\label{6611},
		\end{align}
		where $\boldsymbol{\chi}^{\left(i\right)}$ and $\boldsymbol{\chi}^{\left(j\right)}$ denote the $i$-th   and $j$-th element of $\boldsymbol{\chi}$, respectively;  (\ref{69}) is obtained based on (\ref{18}); (\ref{651}) is obtained according to the Slepian-Bangs Formula in \cite{10050406}; (\ref{6611}) is obtained due to $\frac{\partial\boldsymbol{\Xi}}{\partial\boldsymbol{\chi}^{\left(j\right)}}=0$ , for $\forall j\in \mathbb{Z}^+$.  Furthermore, based on (\ref{15}), $\frac{\partial\widehat{\mathbf{x}}}{\partial o_x}$ is derived as
		\begin{equation}\label{70}
			\frac{\partial\widehat{\mathbf{x}}}{\partial{o}_{x}}=\text{vec}\left(\mathbf{H}_{0}^{'}\widetilde{\mathbf{X}}+\tau^{'}\mathbf{H}_{0}\overline{\mathbf{X}}\right),
		\end{equation}
		where $\overline{\mathbf{X}}$ is defined in (\ref{661}).
		\begin{figure*}[b] 
			\hrulefill  
			\vspace{-0.5em}  
			\begin{equation}\label{661}  
				\overline{\mathbf{X}}=\left[-jw_1e^{-jw_1\tau}\widetilde{\mathbf{x}}\left(w_1\right),-jw_2e^{-jw_2\tau}\widetilde{\mathbf{x}}\left(w_2\right),...,-jw_Ne^{-jw_N\tau}\widetilde{\mathbf{x}}\left(w_N\right)\right].~~~~~~~~~~~~~~~~~~~~~~~~~~~~~~~~~~~~~~~~~
			\end{equation}
			\begin{equation}\label{75}
				J_{D}^{o_{x}o_{x}}\stackrel{(a)}{=}\frac{2}{\sigma_{s}^2}\Re\left\{\frac{\partial\widehat{\mathbf{x}}^{H}}{\partial o_x}\frac{\partial\widehat{\mathbf{x}}}{\partial o_x}\right\} \stackrel{(b)}{=}\frac{2}{\sigma_{s}^2}\Re\left\{\text{vec}^{H}\left(\mathbf{H}_{0}^{'}\widetilde{\mathbf{X}}+\tau^{'}\mathbf{H}_{0}\overline{\mathbf{X}}\right)\cdot\text{vec}\left(\mathbf{H}_{0}^{'}\widetilde{\mathbf{X}}+\tau^{'}\mathbf{H}_{0}\overline{\mathbf{X}}\right)\right\}
				\stackrel{(c)}{=}\frac{2}{\sigma_{s}^2}\Re\left\{\operatorname{Tr}\left(\boldsymbol{\zeta}_{1}\boldsymbol{\Sigma}\right)\right\},\tag{72}
			\end{equation}
			\vspace{-3em}
		\end{figure*} 
		$\tau'$ and $\mathbf{H}_{0}^{'}$  denote the first-order derivatives of $\tau$ and $\mathbf{H}_0$ with respect to   $o_x$, respectively,  with $\mathbf{H}_{0}^{'}=\beta_{0}^{'}\boldsymbol{\alpha}\left(\phi\right)\boldsymbol{\alpha}\left(\theta\right)^{H}+\beta_{0}\boldsymbol{\alpha}^{'}\left(\phi\right)\boldsymbol{\alpha}^{H}\left(\theta\right)+\beta_{0}\boldsymbol{\alpha}\left(\phi\right)\left(\boldsymbol{\alpha}^{'}\left(\theta\right)\right)^{H}$ and $\tau^{'}=\frac{1}{c}\Upsilon_1$.
		Moreover, $\beta_{0}^{'}=	-\frac{\epsilon_2\varrho}
		{2\left(d_{b,o}+d_{r,o}\right)^{\epsilon_2/2+1}}
		\Upsilon_1$, with $\Upsilon_1=\left(\frac{\left(o_x-b_x\right)}{d_{b,o}}+\frac{\left(o_x-r_x\right)}{d_{r,o}}\right)$. Here,  $d_{b,o}$ and $d_{r,o}$ denote the distance between the target and the ISSC TR and the distance between the target and the  SRE, respectively, with $d_{b,o}=\sqrt{\left(o_x-b_x\right)^2+\left(o_y-b_y\right)^2}$ and $d_{r,o}=\sqrt{\left(o_x-r_x\right)^2+\left(o_y-r_y\right)^2}$. Moreover, $\boldsymbol{\alpha}^{'}\left(\phi\right)$ and $\boldsymbol{\alpha}^{'}\left(\theta\right)$ denote the first-order derivatives of $\boldsymbol{\alpha}\left(\phi\right)$ and $\boldsymbol{\alpha}\left(\theta\right)$  with respect to $o_x$, respectively, where $\boldsymbol{\alpha}^{'}\left(\phi\right)=\phi^{'}\tilde{\boldsymbol{\alpha}}\left(\phi\right)\odot\boldsymbol{\alpha}\left(\phi\right)$ and $\boldsymbol{\alpha}^{'}\left(\theta\right)=\theta^{'}\tilde{\boldsymbol{\alpha}}\left(\theta\right)\odot\boldsymbol{\alpha}\left(\theta\right)$,
		with $\phi^{'}=\frac{r_y-o_y}{d_{r,o}^2}$, $\tilde{\boldsymbol{\alpha}}\left(\phi\right)=\left[0,j\pi\text{cos}\left(\phi\right),...,j\left(
		N_R-1\right)\pi\text{
			cos}\left(\phi\right)\right]^{T}$, $\theta^{'}=\frac{
			b_y-o_y}{d_{b,o}^2}$, and $\tilde{\boldsymbol{\alpha}}\left(\theta\right)=\left[0,j\pi\text{cos}\left(\theta\right),...,j\left(
		N_T-1\right)\pi\text{
			cos}\left(\theta\right)\right]^{T}$. Similarly, $\frac{\partial\widehat{\mathbf{x}}}{\partial o_y}$ is derived as
		\begin{equation}\label{72}
			\frac{\partial\widehat{\mathbf{x}}}{\partial o_y}=\text{vec}\left(\mathbf{H}_{0}^{''}\widetilde{\mathbf{X}}+\tau^{''}\mathbf{H}_{0}\overline{\mathbf{X}}\right),
		\end{equation}
		where $\mathbf{H}_{0}^{''}$ and $\tau^{''}$ respectively denote the first-order derivatives of $\mathbf{H}_0$ and $\tau$ with respect to $o_y$. Specifically,  $\mathbf{H}_{0}^{''}=\beta_{0}^{''}\boldsymbol{\alpha}\left(\phi\right)\boldsymbol{\alpha}\left(\theta\right)^{H}+\beta_{0}\boldsymbol{\alpha}^{''}\left(\phi\right)\boldsymbol{\alpha}^{H}\left(\theta\right)+\beta_{0}\boldsymbol{\alpha}\left(\phi\right)\left(\boldsymbol{\alpha}^{''}\left(\theta\right)\right)^{H}$, and $\tau^{''}=\frac{1}{c}\Upsilon_{2}$,
		where $\beta_{0}^{''}=-\frac{\epsilon_2\varrho}
		{2\left(d_{b,o}+d_{r,o}\right)^{\epsilon_2/2+1}}
		\Upsilon_2$, with $\Upsilon_2=\left(\frac{\left(o_y-b_y\right)}{d_{b,o}}+\frac{\left(o_y-r_y\right)}{d_{r,o}}\right)$. Moreover,
		$\boldsymbol{\alpha}^{''}\left(\phi\right)$ and $\boldsymbol{\alpha}^{''}\left(\theta\right)$ denote the first-order derivatives of $\boldsymbol{\alpha}\left(\phi\right)$ and $\boldsymbol{\alpha}\left(\theta\right)$ with respect to $o_y$, respectively, where $\boldsymbol{\alpha}^{''}\left(\phi\right)=\phi^{''}\tilde{\boldsymbol{\alpha}}\left(\phi\right)\odot\boldsymbol{\alpha}\left(\phi\right)$ and $\boldsymbol{\alpha}^{''}\left(\theta\right)=\theta^{''}\tilde{\boldsymbol{\alpha}}\left(\theta\right)\odot\boldsymbol{\alpha}\left(\theta\right)$, with
		$\phi^{''}=\frac{o_x-r_x}{d_{r,o}^2}$ and  $\theta^{''}=\frac{
			o_x-b_x}{d_{b,o}^2}$. Then, the  first-order derivative of $\widehat{\mathbf{x}}$ with respect to $\Delta \tau_{t,r}$ is derived as
		\begin{equation}\label{74}
			\frac{\partial\widehat{\mathbf{x}}}{\partial\Delta \tau_{t,r}}=\text{vec}\left(\mathbf{H}_{0}\overline{\mathbf{X}}\right).
		\end{equation}
		$\mathbf{J}_{D}^{c_1}$ is expressed as $\mathbf{J}_{D}^{c_1}=\begin{bmatrix}
			J_{D}^{o_x o_x}&J_{D}^{o_x o_y}\\
			J_{D}^{o_x o_y}&J_{D}^{o_y o_y}
		\end{bmatrix}$. 
		Then, $\mathbf{J}_{D}^{o_{x}o_{x}}$ is derived as (\ref{75}), where equality $(a)$ is obtained based on (\ref{6611}); equality $(b)$ is obtained by substituting (\ref{70}) into (\ref{6611}); 
		equality $(c)$ is obtained by utilizing the property that  $\text{vec}^{H}\left(A\right)\text{vec}\left(B\right)=\operatorname{Tr}\left(A^{H}B\right)$, where $\boldsymbol{\zeta}_{1}=\sum_{n=1}^{N}\left(\dot{\mathbf{H}}_{0}^{n}\right)^H\cdot\dot{\mathbf{H}}_{0}^{n}$, with $\dot{\mathbf{H}}_{0}^{n}=\mathbf{H}_{0}^{'}-jw_{n}\tau^{'}\mathbf{H}_{0}$. Similar to the process of deriving (\ref{75}), $J_{D}^{o_x o_y}$ and $J_{D}^{o_y o_y}$ are derived as the following form based on (\ref{70}) and (\ref{72}), i.e.,
		\setcounter{equation}{72}
		\begin{align}
			J_{D}^{o_x o_y}=\frac{2}{\sigma_{s}^2}\Re\left\{\operatorname{Tr}\left(\boldsymbol{\zeta}_{2}\boldsymbol{\Sigma}\right)\right\}\label{721},\\
			J_{D}^{o_y o_y}=\frac{2}{\sigma_{s}^2}\Re\left\{\operatorname{Tr}\left(\boldsymbol{\zeta}_{3}\boldsymbol{\Sigma}\right)\right\},\label{731}
		\end{align}  
		where $\boldsymbol{\zeta}_{2}=\sum_{n=1}^{N}\left(\dot{\mathbf{H}_{0}^n}\right)^{H}\ddot{\mathbf{
				H}}_0^n$ and $\boldsymbol{\zeta}_{3}=\sum_{n=1}^{N}\left(\ddot{\mathbf{
				H}}_0^n\right)^{H}\cdot \ddot{\mathbf{
				H}}_0^n$, with $\ddot{\mathbf{
				H}}_0^n=\mathbf{H}_{0}^{''}-jw_{n}\tau^{''}\mathbf{H}_0$. Based on (\ref{70})-(\ref{74}), the elements of $J_D^{\mathbf{c}_2}=\left[J_{D}^{o_x\Delta \tau_{t,r}}, J_{D}^{o_y\Delta \tau_{t,r}}\right]^T$ and $J_{D}^{c_3}=J_{D}^{\Delta\tau_{t,r}\Delta\tau_{t,r}}$ are derived as
		\begin{align}
			J_{D}^{o_x\Delta \tau_{t,r}}
			&=\frac{2}{\sigma_{s}^2}\Re\left\{\operatorname{Tr}\left(\boldsymbol{\zeta}_{4}\boldsymbol{\Sigma}\right)\right\}\label{77},\\
			J_{D}^{o_y\Delta \operatorname{Tr}}&=\frac{2}{\sigma_{s}^2}\Re\left\{\operatorname{Tr}\left(\boldsymbol{\zeta}_{5}\boldsymbol{\Sigma}\right)\right\}\label{771},\\
			J_{D}^{\Delta\tau_{t,r}\Delta\tau_{t,r}}&=\frac{2}{\sigma_s^{2}}\Re\left\{\operatorname{Tr}\left(\boldsymbol{\zeta}_{6}\boldsymbol{\Sigma}\right)\right\},\label{781}
		\end{align}
		where (\ref{77}), (\ref{771}) and (\ref{781}) are obtained similarly to the process of obtaining (\ref{721}) and (\ref{731}). 
		Here, $\boldsymbol{\zeta}_{4}=\sum_{n=1}^{N}-jw_n\left(\dot{\mathbf{H}_0^n}\right)^H\mathbf{H}_0$, $\boldsymbol{\zeta}_{5}=\sum_{n=1}^{N}-jw_n\left(\ddot{\mathbf{H}_0^n}\right)^H\mathbf{H}_0$, and $\boldsymbol{\zeta}_{6}=\sum_{n=1}^{N}w_n^2\left(\mathbf{H}_0\right)^H\mathbf{H}_0$. Among the estimated parameters, only the TS error $\Delta\tau_{t,r}$ is a random variable  with prior information and is independent of the position parameters $\mathbf{p}_o=\left(o_x,o_y\right)^{T}$. Consequently,  $\mathbf{J}_{B}\left(\boldsymbol{\chi}\right)=\begin{bmatrix}
			\boldsymbol{0}_{2\times2}&\boldsymbol{0}_{2\times1}\\
			\boldsymbol{0}_{1\times2}&J_B^{\Delta\tau_{t,r}\Delta\tau_{t,r}}
		\end{bmatrix}$  and $J_B^{\Delta\tau_{t,r}\Delta\tau_{t,r}}$ is derived as 
		\begin{equation}\label{79}
			J_B^{\Delta\tau_{t,r}\Delta\tau_{t,r}}=\frac{\partial \Delta \tau_{t,r}^{T}}{\partial \Delta \tau_{t,r}} \left(\sigma_{T}^{2}\right)^{-1} \frac{\partial \Delta \tau_{t,r}}{\partial \Delta \tau_{t,r}}=\frac{1}{\sigma_{T}^2}.
		\end{equation}
				\begin{figure*}[b] 
			\hrulefill 
						\begin{align}
				g^{'}\left(\hat{\psi}\right)=-\hat{\psi}e^{-\hat{\psi}^{2}/2}
				-\sqrt{2\pi}\left(Q(\hat{\psi})-\hat{\psi}\frac{1}{\sqrt{2\pi}}e^{-\hat{\psi}^{2}/2}\right)=-\sqrt{2\pi}Q\left(\hat{\psi}\right)\label{78}.~~~~~~~~~~~~~~
				\tag{80}\end{align}  
			\begin{align}
				\left|f_j\left(\mathbf{w}_c,\mathbf{w}_0\right)-\mu_j\right|^2
				&=\left|\operatorname{Tr}\left(\boldsymbol{\zeta}_{j}\mathbf{w}_c\mathbf{w}_c^{H}\right)+\operatorname{Tr}\left(\boldsymbol{\zeta}_{j}\mathbf{w}_0{\mathbf{w}_0}^H\right)-\mu_{j}\right|^2 \tag{86}\label{8511}\\
				&=\underbrace{\left|\mathbf{w}_c^{H}\boldsymbol{\zeta}_j\mathbf{w}_c\right|^2}_{\text{non-convex}}+\underbrace{\left|\Psi_{j}\right|^2}_{\text{non-convex}}-\underbrace{2\Re\left\{\left(
					\Psi_{j}\right)^{*}\mathbf{w}_c^{H}\boldsymbol{\zeta}_j\mathbf{w}_c\right\}}_{\text{non-convex}}\tag{87}\label{8611},~~~~~~~~~~~~~~~~
			\end{align}
			\vspace{-1em}
		\end{figure*}
		\vspace{-0.5cm}
		\section{proof of lemma 4.1}
		According to \cite{9685691}, for any $\psi,\hat{\psi}\in\mathbb{R}$, $Q$-function satisfies
		\begin{equation}\label{761}
			Q\left(\psi\right)\leq\hat{b}\left(\hat{\psi}\right)e^{-\hat{a}\left(\hat{\psi}\right)\psi}+\hat{
				c}\left(\hat{\psi}\right),
		\end{equation}
		where $\hat{a}\left(\hat{
			\psi}\right)\triangleq \text{max}\left\{\frac{e^{-\frac{\hat{\psi}^2}{2}}}{\sqrt{2\pi}Q\left(\hat{\psi}\right)}, \hat{\psi}\right\}$, $\hat{b}\left(\hat{\psi}\right)=\frac{1}{\sqrt{2\pi}\hat{a}\left(\hat{\psi}\right)}e^{\hat{a}\left(\hat{\psi}\right)\hat{\psi}-\frac{\hat{\psi}^2}{2}}$, and  $\hat{c}\left(\hat{\psi}\right)= Q\left(\hat{\psi}\right)-Q_1\left(\hat{\psi}\right)$, with $Q_1\left(\hat{\psi}\right)=\hat{b}\left(\hat{\psi}\right)e^{-\hat{a}\left(\hat{\psi}\right)\hat{\psi}}$.  To acquire the real value of $\hat{a}\left(\hat{\psi}\right)$, we have the following lemma.
		\begin{Lemma}\label{b.1}
			$\frac{e^{-\frac{\hat{\psi}^2}{2}}}{\sqrt{2\pi}Q\left(\hat{\psi}\right)}-\hat{\psi} > 0$ holds for any $\hat{\psi}\in\mathbb{R}$.
		\end{Lemma}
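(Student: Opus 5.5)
The claim is the strict Mills-ratio inequality $Q(\hat\psi) < \frac{1}{\sqrt{2\pi}}\frac{e^{-\hat\psi^2/2}}{\hat\psi}$ in disguise for $\hat\psi>0$. It is trivial for $\hat\psi\le 0$. So I would split into these two cases. Throughout, write $\phi(x)=\frac{1}{\sqrt{2\pi}}e^{-x^2/2}$, so that $Q'(x)=-\phi(x)$ and $\phi'(x)=-x\phi(x)$. The statement is then equivalent to $\phi(\hat\psi)-\hat\psi Q(\hat\psi)>0$, because $Q(\hat\psi)>0$ for every real $\hat\psi$ and multiplying through by it preserves the inequality.

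\emph{Case $\hat\psi\le 0$.} Here $\phi(\hat\psi)>0$ and $-\hat\psi Q(\hat\psi)\ge 0$, so the sum is strictly positive and nothing more is needed.

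\emph{Case $\hat\psi>0$.} I would use the integral representation directly. For $t>\hat\psi>0$ we have $t/\hat\psi>1$, hence
\begin{equation*}
\hat\psi\, Q(\hat\psi)=\int_{\hat\psi}^{\infty}\hat\psi\,\phi(t)\,dt<\int_{\hat\psi}^{\infty}t\,\phi(t)\,dt=\Big[-\phi(t)\Big]_{\hat\psi}^{\infty}=\phi(\hat\psi).
\end{equation*}
The inequality is strict because the integrand gap $(t-\hat\psi)\phi(t)$ is continuous and positive on $(\hat\psi,\infty)$. This gives $\phi(\hat\psi)-\hat\psi Q(\hat\psi)>0$ in this case as well.

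As a cross-check, consistent with the computation in (80), set $h(x)=\phi(x)-xQ(x)$. Then $h'(x)=-x\phi(x)-Q(x)+x\phi(x)=-Q(x)<0$, so $h$ is strictly decreasing. Also $h(x)\to 0$ as $x\to\infty$, since $xQ(x)\le\phi(x)\to0$ by the bound above (or by L'Hôpital). A strictly decreasing function with limit $0$ at $+\infty$ must be strictly positive everywhere, which again proves the claim for all $\hat\psi\in\mathbb{R}$ in one stroke.

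The only step needing care is the strictness and the limit at infinity. Both follow from the elementary integral bound, so I expect no real obstacle. The conclusion is that $\hat a(\hat\psi)=\frac{e^{-\hat\psi^2/2}}{\sqrt{2\pi}Q(\hat\psi)}$, since this term always dominates $\hat\psi$.
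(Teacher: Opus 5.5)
Your proof is correct. Your main argument takes a different route from the paper's, while your ``cross-check'' is essentially the paper's proof.

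The paper sets $U(\hat\psi)=e^{-\hat\psi^2/2}-\sqrt{2\pi}\hat\psi Q(\hat\psi)$ and computes $U'(\hat\psi)=-\sqrt{2\pi}Q(\hat\psi)<0$. It then asserts $U(\hat\psi)\to 0$ as $\hat\psi\to\infty$ by L'H\^opital's rule and concludes $U>0$ on all of $\mathbb{R}$ in one stroke, with no case split.

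Your primary argument instead disposes of $\hat\psi\le 0$ by a sign observation. For $\hat\psi>0$, it proves the classical Mills-ratio bound directly by comparing integrands, $\hat\psi Q(\hat\psi)<\int_{\hat\psi}^{\infty}t\phi(t)\,dt=\phi(\hat\psi)$.

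\textbf{What each route buys.} Yours is fully self-contained: strictness follows from the positivity of $(t-\hat\psi)\phi(t)$ on $(\hat\psi,\infty)$, and no limit at infinity has to be evaluated. The paper's route is uniform in $\hat\psi$ and also shows that the gap $U$ is strictly decreasing. However, its rigor rests on the limit at $+\infty$, which the paper only invokes.

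\textbf{One caveat on your cross-check.} There you obtain $\lim_{x\to\infty}xQ(x)=0$ from the integral bound itself, so that version is not independent of your main argument. To make it stand alone, apply L'H\^opital to $Q(x)/x^{-1}$. The ratio of derivatives is $x^2\phi(x)\to 0$, which gives the limit without using the Mills-ratio bound.
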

		\vspace{-0.1cm}
		\begin{proof}
			For facilitating the proof, defining $U\left(
			\hat{\psi}\right)$ as $U\left(
			\hat{\psi}\right)=e^{-\frac{\hat{\psi}^2}{2}}-\sqrt{2\pi}\hat{\psi} Q\left(\hat{\psi}\right)$. 
			The first-order derivative of $U\left(\hat{\psi}\right)$ with respect to $\hat{\psi}$ is given by (\ref{78}). 
			Since $Q(\hat{\psi})>0$ for any $\hat{\psi}\in\mathbb{R}$,  $U(\hat{\psi})$ is strictly decreasing in $\hat{\psi}$. Moreover, it follows form L'H\^opital's Rule that $\mathop{\text{lim}}\limits_{\hat{\psi}\rightarrow\infty}U\left(\hat{\psi}\right)=0$. Consequently, $U\left(\hat{\psi}\right)>0$ for all $\hat{\psi}\in\mathbb{R}$, which implies that $U\left(
			\hat{\psi}\right)=e^{-\frac{\hat{\psi}^2}{2}}-\sqrt{2\pi}\hat{\psi} Q\left(\hat{\psi}\right)>0$, i.e., 	$\frac{e^{-\frac{\hat{\psi}^2}{2}}}{\sqrt{2\pi}Q\left(\hat{\psi}\right)}-\hat{\psi} > 0$. 
		\end{proof} \hfill $\blacksquare$
		
		Based on Lemma \ref{b.1}, 
		we have  $\hat{a}\left(\hat{
			\psi}\right)= \frac{e^{-\frac{\hat{\psi}^2}{2}}}{\sqrt{2\pi}Q\left(\hat{\psi}\right)}$ and then $\hat{b}\left(\hat{
			\psi}\right)$, $\hat{c}\left(\hat{
			\psi}\right)$ are simplified as $\hat{b}\left(\hat{
			\psi}\right)=Q\left(\hat{\psi}\right)e^{\hat{a}\left(\hat{
				\psi}\right)\hat{	\psi}}$, $\hat{c}\left(\hat{
			\psi}\right)=0$, respectively.  Hence, (\ref{761}) is rewritten as 
		\setcounter {equation} {80}
		\begin{equation}
			Q\left(\psi\right)\leq 	Q\left(\hat{\psi}\right)e^{-\hat{a}\left(\hat{\psi}\right)\left(\psi-\hat{	\psi}\right)}\label{801}.
		\end{equation}
		By setting $\psi=\frac{\sqrt{L}\left(C\left(\gamma\right)-R_c\right)\text{ln}2}{B\left(\gamma\right)}$ and $\hat{\psi}=\frac{\sqrt{L}\left(C\left(\gamma\right)-R_c^{'}\right)\text{ln}2}{B\left(\gamma\right)}$,  the upper bound of $U\left(R_c\right)$  is derived as
		\begin{align}
			&\log_{10} \! \left( \frac{Q(\psi)}{R_c L} \right) 
			\le \log_{10} \! \left( \frac{Q(\hat{\psi}) e^{-\hat{a}(\hat{\psi})(\psi-\hat{\psi})}}{R_c L} \right) \label{811} \\
			&= \log_{10} Q(\hat{\psi}) - \log_{10}(R_c L) - \hat{a}(\hat{\psi})(\psi-\hat{\psi}) \log_{10}^e, \label{821}
		\end{align}
		where (\ref{811}) is obtained according to  (\ref{801}), and (\ref{821}) is obtained by expanding the logarithmic terms in (\ref{801}). Besides, the upper bound of the LHS of (\ref{36_11})  is obtained by taking the first-order Taylor expansion of the LHS of (\ref{36_11}).

		\section{proof of lemma 4.2}
		The first-order derivative of $\hat{\gamma}$ with respect to $\gamma$ is
		\begin{equation}\label{48}
			\frac{d\hat{\gamma}}{d\gamma}=\frac{\sqrt{L}\left(2\gamma+\gamma^{2}+ R_c\ln2-\log\left(\gamma+1\right)\right)}{\left(\gamma+1\right)^{3}\left(1-\frac{1}{\left(\gamma+1\right)^2}\right)^{\frac{3}{2}}}.
		\end{equation}
		To analyze the monotonicity of $\hat{\gamma}$ with respect to $\gamma$, let $E$ denote the numerator of (\ref{48}). The first-order derivative of $E$ with respect to $\gamma$ is given as  
		\begin{equation}\label{49}
			\frac{dE}{d\gamma}=\sqrt{L}\left(2\gamma+2-\frac{1}{\gamma+1}\right).
		\end{equation}
		The function in  (\ref{49}) monotonically  increases with respect to $\gamma$, due to  that the first-order derivative of (\ref{49}) is $\sqrt{L}\left(2+{\frac{1}{\left(\gamma +1\right)^{2}}}\right)\textgreater 0$. Thus, $\frac{dE}{d\gamma}$ is strictly increasing with respect to $\gamma$, and since $\left.\frac{dE}{d\gamma}\right|_{\gamma=0}=\sqrt{L}>0$, we conclude that $E$ monotonically increases over $\gamma$.  Furthermore, for all $\gamma\textgreater0$,  $E$ remains positive,  which implies $\hat{\gamma}$ monotonically increases with respect to $\gamma$. It is worth noting that  Q-function is a monotonically decreasing function according to its  fundamental definition. Combining these  analytical results,  we can rigorously establish that the BER $\rho_{b}$ in (\ref{46}) demonstrates a monotonic decreasing behavior  as $\gamma$ increases.    

		\section{proof of lemma 4.3}
		\vspace{-0.1cm}
		The expansion of $\left|f_j\left(\mathbf{w}_c,\mathbf{w}_0\right)-\mu_j\right|^2$ is expressed as (\ref{8611}), where (\ref{8511}) is obtained  by substituting the definition of $f_j\left(\mathbf{w}_c,\mathbf{w}_0\right)$ into the original expression; (\ref{8611}) is obtained by expanding (\ref{8511}) and leveraging the property that $\operatorname{Tr}\left(\mathbf{A}\mathbf{B}\right)=\operatorname{Tr}\left(\mathbf{B}\mathbf{A}\right)$,
		where $\Psi_j=\mathbf{w}_0^H\boldsymbol{\zeta}_{j}{\mathbf{w}_0}-\mu_{j}$. The first non-convex term of (\ref{8611}) is equivalently transformed into  $\left|\mathbf{w}_c^{H}\boldsymbol{\zeta}_j\mathbf{w}_c\right|^2=\hat{\mathbf{w}}_c^{H}\mathbf{G}_j\hat{\mathbf{w}}_c$,  where $\hat{\mathbf{w}}_c=\text{vec}\left(\mathbf{w}_c\mathbf{w}_c^{H}\right)$ and $\mathbf{G}_j=\left(\boldsymbol{\zeta}_j\right)^{*}\otimes\boldsymbol{\zeta}_j$, by utilizing the property  that $\operatorname{Tr}\left(\mathbf{A}\mathbf{B}\mathbf{C}\mathbf{D}^{H}\right)=\left(\text{vec}\left(\mathbf{D}\right)\right)^{H}\left(\mathbf{C}\otimes\mathbf{A}\right)\text{vec}\left(\mathbf{B}\right)$. Then, performing the first-order Taylor expansion at $\hat{\mathbf{w}}_c^{'}=\text{vec}\left(\mathbf{w}_c^{'}\left(\mathbf{w}_c^{'}\right)^{H}\right)$,  $\hat{\mathbf{w}}_c^{H}\mathbf{G}_j\hat{\mathbf{w}}_c$ is expressed as (\ref{85}), 
		where
		$2\Re\left\{\left(\hat{\mathbf{w}}_c^{'}\right)^{H}\mathbf{G}_j\hat{\mathbf{w}}_c\right\}$ is reformulated as (\ref{86}). Here, equality $(a)$ is obtained by substituting the definition of $\mathbf{w}_c^{'}$  and leveraging the property   $\operatorname{Tr}\left(\mathbf{A}\mathbf{B}\mathbf{C}\mathbf{D}^{H}\right)=\left(\text{vec}\left(\mathbf{D}\right)\right)^{H}\left(\mathbf{C}\otimes\mathbf{A}\right)\text{vec}\left(\mathbf{B}\right)$; equality $(b)$ is obtained by exploiting the Hermitian property of $\mathcal{G}_j^{'}=\left(\boldsymbol{\zeta}_j\right)^{H}\mathbf{w}_c^{'}\left(\mathbf{w}_c^{'}\right)^{H}\boldsymbol{\zeta}_j$.
		To facilitate solution, the third term of (\ref{8611}) is equivalently transformed into the following form, i.e.,
		\setcounter{equation}{89}
		\begin{align}
			&2\Re\left\{\left(
			\Psi_{j}\right)^{*}\mathbf{w}_c^{H}\boldsymbol{\zeta}_j\mathbf{w}_c\right\}\label{841}\\
			&=2\Re\left\{\left({\mathbf{w}_0}^H\boldsymbol{\zeta}_{j}\mathbf{w}_0\right)^{*}\cdot\left({\mathbf{w}_c}^H\boldsymbol{\zeta}_{j}\mathbf{w}_c\right)\right\}\nonumber\\
			&~~~-2\Re\left\{\mu_{j}^*\left({\mathbf{w}_c}^H\boldsymbol{\zeta}_{j}\mathbf{w}_c\right)\right\}\label{851}\\
			&=\left(g^*h+gh^{*}\right)-\left(\mu_{j}^{*}h+\mu_{j}h^*\right)\label{861},
		\end{align} 
		where (\ref{851}) is obtained by substituting the definition of $\Psi_j$  into (\ref{841}), and (\ref{861}) is obtained by leveraging the property  $A+A^{*}=2\Re\left\{A\right\}$, with $g=\mathbf{w}_0^H\boldsymbol{\zeta}_{j}\mathbf{w}_0$ and $h=\mathbf{w}_c^H\boldsymbol{\zeta}_{j}\mathbf{w}_c$, respectively. Performing the first-order Taylor approximation at $g^{'}$ and $h^{'}$ for the first term of (\ref{861}), we have,
		\begin{equation}\label{88}
			\begin{aligned}
				g^*h+gh^{*}=2\Re\left\{\mathbf{a}_1^{'}\mathbf{w}_0\right\}+2\Re\left\{\mathbf{a}_2^{'}\mathbf{w}_c\right\}	+C,
			\end{aligned}
		\end{equation} 
		where $\mathbf{a}_1^{'}=\left(\left(h^{'}\right)^{*}\left(\boldsymbol{\zeta}_j\right)\mathbf{w}_0^{'}+h^{'}\left(\boldsymbol{\zeta}_j\right)^H\mathbf{w}_0^{'}\right)^H$ and $\mathbf{a}_2^{'}=\left(\left(g^{'}\right)^{*}\left(\boldsymbol{\zeta}_j\right)\mathbf{w}_c^{'}+g^{'}\left(\boldsymbol{\zeta}_j\right)^H\mathbf{w}_c^{'}\right)^H$, respectively,
		with $h^{'}=\left(\mathbf{w}_c^{'}\right)^H\boldsymbol{\zeta}_{j}\mathbf{w}_c^{'}$ and $g^{'}=\left(\mathbf{w}_0^{'}\right)^H\boldsymbol{\zeta}_{j}\mathbf{w}_0^{'}$. Furthermore, $C$ is a constant, which is derived as $C=\left(g^{'}\right)^{*}h^{'}+g^{'}\left({h^{'}}\right)^*-2\Re\left\{\mathbf{a}_1^{'}\mathbf{w}_0^{'}\right\}-2\Re\left\{\mathbf{a}_2^{'}\mathbf{w}_c^{'}\right\}$.
		For the second term of (\ref{861}), it is transformed into the following form, i.e.,
		\begin{align}
			\mu_{j}^{*}h+\mu_{j}h^*
			&=\left(
			\mu_{j}\right)^{*}\mathbf{w}_c^{H}\mathbf{\zeta}_j\mathbf{w}_c
			+\left(
			\mu_{j}\right)\left(\mathbf{w}_c^{H}\boldsymbol{\zeta}_j\mathbf{w}_c\right)^{*}\label{931}\\
			&=\mathbf{w}_c^{H}\underbrace{
				\left[\left(\mu_{j}\right)^{*}\boldsymbol{\zeta}_j+\mu_{j}\left(\boldsymbol{\zeta}_j\right)^{H}\right]}_{\mathbf{\Lambda}_j}\mathbf{w}_c,\label{90}
		\end{align}
		where (\ref{931}) is obtained by substituting the definition of $h$ into the original expression and (\ref{90}) is obtained by rearranging terms in (\ref{931}). Similar to (\ref{85}),   $\mathbf{w}_c^{H}\mathbf{\Lambda}_j\mathbf{w}_c$ is reformulated as the convex form given in (\ref{91}) via SCA.
		\begin{figure*}[t] 
			\vspace{-2em}  
			\begin{equation}\label{85}
				\begin{aligned}
					~~~~~~~\hat{\mathbf{w}}_c^{H}\mathbf{G}_j\hat{\mathbf{w}}_c= \left(\hat{\mathbf{w}}_c^{'}\right)^H\mathbf{G}_j\hat{\mathbf{w}}_c^{'}-2\Re\left\{\left(\hat{\mathbf{w}}_c^{'}\right)^{H}\mathbf{G}_j\hat{\mathbf{w}}_c^{'}\right\}
					+2\Re\left\{\left(\hat{\mathbf{w}}_c^{'}\right)^{H}\mathbf{G}_j\hat{\mathbf{w}}_c\right\},~~~~~~~~~~~~~~~~~~~~~~~~~~~~~~~~~~~~
				\end{aligned}\tag{88}
			\end{equation}
			\begin{equation}\label{86}
				\begin{aligned}
					2\Re\left\{\left(\hat{\mathbf{w}}_c^{'}\right)^{H}\mathbf{G}_j\hat{\mathbf{w}}_c\right\}&\stackrel{(a)}{=}2\Re\left\{\mathbf{w}_c^{H}\mathcal{G}_j^{'}\mathbf{w}_c\right\}
					\stackrel{(b)}{=}2\left|\left(\mathbf{w}_c^{'}\right)^{H}\left(\boldsymbol{\zeta}_j\right)^{H}\mathbf{w}_c \right|^2.~~~~~~~~~~~~~~~~~~~~~~~~~~~~~~~~~~~~~~~~~
				\end{aligned}\tag{89}
			\end{equation}
			\begin{equation}\label{91}
				\begin{aligned}
					\mathbf{w}_c^{H}\mathbf{\Lambda}_j\mathbf{w}_c&=2\mathcal{R}\left\{\left(\mathbf{w}_c^{'}\right)^H\mathbf{\Lambda}_j\mathbf{w}_c\right\}-2\mathcal{R}\left\{\left(\mathbf{w}_c^{'}\right)^H\mathbf{\Lambda}_j\mathbf{w}_c^{'}\right\}
					+\left(\mathbf{w}_c^{'}\right)^H\mathbf{\Lambda}_j\mathbf{w}_c^{'}.~~~~~~~~~~~~~~~~~~~~~~~~~~~~~
				\end{aligned}
			\end{equation}
			\begin{equation}\label{93}
				\left|\Psi_j\right|^2= 2\left|\left(\mathbf{w}_0^{\left(i-1\right)}\right)^{H}\left(\boldsymbol{\zeta}_j\right)^{H}\mathbf{w}_0 \right|^2-2\mathcal{R}\left\{\left(\mathbf{w}_0^{\left(i-1\right)}\right)^H\mathbf{\Lambda}_j\mathbf{w}_0\right\}\tag{99},~~~~~~~~~~~~~~~~~~~~~~~~~~~~~~~~~~~~~~~~~~~~~~~
			\end{equation}
			\begin{equation}\label{96}
				\begin{aligned}
					\left|f_j\left(\mathbf{w}_c,\mathbf{w}_0\right)-\mu_j\right|^2
					&= 2\left|\left(\mathbf{w}_c^{'}\right)^{H}\left(\boldsymbol{\zeta}_j\right)^{H}\mathbf{w}_c \right|^2+2\left|\left(\mathbf{w}_0^{'}\right)^{H}\left(\boldsymbol{\zeta}_j\right)^{H}\mathbf{w}_0 \right|^2
					+2\mathcal{R}\left\{\left(\mathbf{a}_1^{'}-\left(\mathbf{w}_0^{'}\right)^H\mathbf{\Lambda}_j\right)\mathbf{w}_0\right\}\\
					&~~~+2\mathcal{R}\left\{\left(\mathbf{a}_2^{'}-\left(\mathbf{w}_c^{'}\right)^H\mathbf{\Lambda}_j\right)\mathbf{w}_c\right\}.
				\end{aligned}\tag{100}
			\end{equation}
			\hrulefill 
		\end{figure*}
		Then, the second term of (\ref{8611}) is given as
		\begin{align}
			\left|\Psi_j\right|^2&=\left|\mathbf{w}_0\boldsymbol{\zeta}_{j}{\mathbf{w}_0}^H-\mu_{j}\right|^2,\label{961}\\
			&=\underbrace{\left|\mathbf{w}_0^{H}\boldsymbol{\zeta}_j\mathbf{w}_0\right|^2}_{\text{non-convex}}+\left|\mu_{j}\right|^2-\underbrace{2\Re\left\{\left(
				\mu_{j}\right)^{*}\mathbf{w}_0^{H}\boldsymbol{\zeta}_j\mathbf{w}_0\right\}}_{\text{non-convex}},\label{92}
		\end{align}
		where (\ref{961}) is obtained by substituting the definition of $\Psi_j$ and (\ref{92}) is obtained by expanding (\ref{961}). Since (\ref{92}) has a similar structure  to (\ref{8611}), its convex reformulation is derived in the same manner. Specifically,  $\left|\Psi_j\right|^2$ is reformulated as the convex form given in (\ref{93}).
		Notably, the constant term is ignored in the  process of deriving (\ref{93}) due to the constant term does not affect the solution of the problem. 
		Based on (\ref{86}), (\ref{88}), (\ref{91}) and (\ref{93}), $\left|f_j\left(\mathbf{w}_c,\mathbf{w}_0\right)-\mu_j\right|^2$ is reformulated into the convex version given in (\ref{96}). 
		\bibliographystyle{IEEEtran}
		\bibliography{reference11}
	\end{document}